\newtheorem{theorem}{Theorem}[section]
\newtheorem{proposition}[theorem]{Proposition}
\newtheorem{lemma}[theorem]{Lemma}
\newtheorem{corollary}[theorem]{Corollary}
\theoremstyle{definition}
\newtheorem{definition}[theorem]{Definition}
\theoremstyle{remark}
\definecolor{keywordcolor}{rgb}{0.7, 0.1, 0.1}   % red
\definecolor{commentcolor}{rgb}{0.4, 0.4, 0.4}   % grey
\definecolor{symbolcolor}{rgb}{0.0, 0.1, 0.6}    % blue
\definecolor{sortcolor}{rgb}{0.1, 0.5, 0.1}      % green
\definecolor{errorcolor}{rgb}{1, 0, 0}           % bright red
\definecolor{stringcolor}{rgb}{0.5, 0.3, 0.2}    % brown
\newcommand{\mathlib}{\texttt{mathlib}\xspace}
\newcommand{\separated}{separated\xspace}
\newcommand{\nhds}[1]{\mathcal{N}_{#1}}
\newcommand{\NN}{\mathbb{N}}
\newcommand{\ZZ}{\mathbb{Z}}
\newcommand{\RR}{\mathbb{R}}
\newcommand{\CC}{\mathbb{C}}
\renewcommand{\SS}{\mathbb{S}}
\renewcommand{\S}{\mathbb{S}^1}
\newcommand{\Loop}{\mathcal{L}}
\long\def\C{\mathcal{C}}
\DeclareMathOperator{\id}{id}
\DeclareMathOperator{\pr}{pr}
\newcommand{\e}{\varepsilon}
\renewcommand{\subset}{\subseteq}
\newcommand{\link}{\ensuremath{{}^\text{\faExternalLink}}}
\newcommand{\Rel}{\mathcal{R}}
\newcommand{\F}{\mathcal{F}}
\newcommand{\PS}[1]{\mathcal{P}(#1)}
\newcommand{\Fi}[1]{\mathcal{F}(#1)}
\newcommand{\forallf}[2]{\forall^f #1 ∈ #2}
\newcommand{\Corr}[3]{\mathcal{T}_{#1}^{#2}#3}
\newcommand{\Rem}[3]{R_{#1}^{#2}#3}
\newcommand{\avg}[1]{\overline{#1}}
\DeclareMathOperator{\Conn}{Conn}
\newcommand{\conn}[2]{\Conn_{#2}(#1)}
\newcommand{\co}{\!:}
\newcommand{\rst}[2]{#1|_{#2}}
\newcommand{\Upd}[3]{\Upsilon_{#1}\ifstrempty{#2}{}{\left(#2, #3\right)}}
\DeclareMathOperator{\Span}{Span}
\newcommand{\rot}[2]{\mathrm{Rot}_{#1, #2}}
\renewcommand{\L}[2]{L(#1, #2)}
\newcommand{\R}{\mathbb{R}}
\newcommand{\mllink}[1]{
\href{https://github.com/leanprover-community/mathlib/blob/8b8cd99cfd2c6ef5caffe15a358f56609e494e8e/src/#1}{\link}
}
\newcommand{\selink}[1]{
\href{https://github.com/leanprover-community/sphere-eversion/blob/cpp2023/src/#1}{\link}
}
\DeclareMathOperator{\CH}{Conv} % Convex hull
\DeclareMathOperator{\im}{im} % Image of map
\title{Formalising the $h$-principle and sphere eversion}
\author{Patrick Massot \and Floris van Doorn \and Oliver Nash}
\date{}
\begin{document}
\maketitle

\begin{abstract}
  In differential topology and geometry, the h-principle is a property enjoyed by certain
  construction problems. Roughly speaking, it states that the only
  obstructions to the existence of a solution come from algebraic topology.

  We describe a formalisation in Lean of the local h-principle for first-order, open,
  ample partial differential relations. This is a significant result in
  differential topology, originally proven by Gromov in 1973 as part of
  his sweeping effort which greatly generalised many previous flexibility
  results in topology and geometry. In particular it reproves Smale's celebrated
  sphere eversion theorem, a visually striking and counter-intuitive construction.
  Our formalisation uses Theillière's implementation of convex integration from 2018.

  This paper is the first part of the \textsf{sphere eversion project}, aiming to
  formalise the \emph{global} version of the h-principle for open and ample first
  order differential relations, for maps between smooth manifolds.
  Our current local version for vector spaces is the main ingredient of this
  proof, and is sufficient to prove the titular corollary of the project.
  From a broader perspective, the goal of this project is to show that one can formalise
  advanced mathematics with a strongly geometric flavour and not only algebraically-flavoured
  mathematics.
\end{abstract}

\section{Introduction}%
\label{sec:introduction}

\subsection{Formal proofs and geometric intuition}%
\label{sub:formal_proofs_and_geometric_intuition}

Mathematical arguments that rely on human geometric intuition can be seen as
challenges to formalisation. Logic, discrete mathematics and algebra have
certainly been formalised much more often. Two notable exceptions are
an elementary proof of Jordan's curve theorem in \cite{Hales07} and
the Poincaré--Bendixson theorem in \cite{Immler20}. The goal of our
project is to make a strong case that even differential topology
can be formalised, including Smale's sphere eversion theorem.

The context of this theorem is Gromov's $h$-principle. He says that a
geometric construction problem satisfies the $h$-principle, or is flexible, if
the only obstructions to the existence of a solution come from algebraic
topology (the letter $h$ stands for ``homotopy'').

\subsection{A toy example}%
\label{sub:a_toy_example}

The easiest example of a flexible construction problem which is not
trivial and which is algebraically obstructed is the deformation of immersions of
circles into planes. Let $f_0$ and $f_1$ be two maps from $\SS^1$ to
$\mathbb{R}^2$ that are immersions. Since $\SS^1$ has dimension one,
this means that both derivatives $f_0'$ and $f_1'$ are nowhere vanishing maps
from $\SS^1$ to $\mathbb{R}^2$. The geometric object we want to
construct is a (smooth\footnote{In this paper, smooth always means infinitely
differentiable.}) homotopy of immersions from $f_0$ to $f_1$, i.e. a smooth map
$F \co \SS^1 \times [0, 1] \to \mathbb{R}^2$ such that
$\rst{F}{\SS^1 \times \{0\}} = f_0$,
$\rst{F}{\SS^1 \times \{1\}} = f_1$, and each $f_p :=
\rst{F}{\SS^1 \times \{p\}}$ is an immersion. If such a homotopy exists
then, $(t, p) \mapsto f_p'(t)$ is a homotopy from $f_0'$ to $f_1'$ among maps
from $\SS^1$ to $\mathbb{R}^2 \setminus \{0\}$. Such maps have a well
defined winding number $w(f'_i) \in \mathbb{Z}$ around the origin, the degree
of the normalised map $f'_i/\|f'_i\| \co \SS^1 \to \SS^1$. So
$w(f_0') = w(f_1')$ is a necessary condition for the existence of $F$, which
comes from algebraic topology.  The Whitney--Graustein theorem
states that this necessary condition is also sufficient.

The lesson to take away from the example above is that a necessary condition
imposed by algebraic topology may turn out to be sufficient.
Indeed the (one-dimensional)
Hopf degree theorem ensures that, provided $w(f'_0) = w(f'_1)$, there
exists a homotopy $g_p$ of nowhere vanishing maps relating $f'_0$ and
$f'_1$. We also know from the topology of $\mathbb{R}^2$ that $f_0$ and $f_1$ are
homotopic, e.g. via the straight-line homotopy
$p \mapsto f_p = (1-p)f_0 + pf_1$. But a priori there is no relation
between $g_p$ and the derivative of $f_p$ for $p \notin \{0, 1\}$.
So we can restate the crucial part of the
Whitney--Graustein theorem as: there is a homotopy of immersions from
$f_0$ to $f_1$ as soon as there is (a homotopy from $f_0$ to $f_1$) and
a homotopy from $f_0'$ to $f_1'$ among nowhere vanishing maps. The
parentheses in the previous sentence indicate that this condition is
always satisfied, but it is important to keep in mind for
generalisations. Gromov says that such a homotopy of uncoupled pairs
$(f, g)$ is a formal solution of the original problem. Note an unfortunate
terminology clash here: the word formal isn't used in the same sense as in
``formalised mathematics''.

\subsection{The $h$-principle}%
\label{sub:the_h_principle}

We now generalise this discussion of formal solutions
to the case of maps between two vector spaces $E$, $F$ with totally arbitrary
pointwise constraints on their derivatives.
A first-order partial differential relation for functions $E\to F$ is any subset
\[
  \Rel\subset E\times F\times \L{E}{F},
\]
where $\L{E}{F}$ denotes the linear maps from $E$ to $F$.
A function $f:E\to F$ satisfies $\Rel$ at a point $x$ in $E$ if
\[
  (x,f(x),Df(x))\in\Rel,
\]
where $Df(x)$ is the derivative of $f$ at $x$.

We say that a pair $(f,\varphi) : E \to F \times \L{E}{F}$ is a
\emph{formal solution} if for all $x\in E$ we have $(x,f(x),\varphi(x))\in\Rel$.
Any formal solution that is obtained from a function $f$
(i.e. $\varphi=Df$) is called \emph{holonomic},
showing that $f$ satisfies $\Rel$ everywhere. The product $E\times F\times \L{E}{F}$
is called the space of $1$-jets of maps from $E$ to $F$ and denoted by $J^1(E, F)$.

We say that $\Rel$ satisfies the \emph{$h$-principle} if any formal
solution can be deformed into a holonomic one inside $\Rel$.

Not all relations satisfy the $h$-principle. For instance the relation of
immersions of circles into a line rather than a plane fails to satisfy it.
A key insight of Gromov was to identify a \emph{geometric} condition on
the relation which ensures that the $h$-principle holds. In the special case of immersions
of circles into planes, at each point $x$ the derivative $f'(x)$ is required to
belong to the complement of $\{0\} \subseteq \RR^2$ and checking Gromov's
condition boils down to the fact that this complement is open and
\emph{ample}. A set in a real affine space is called ample if
all its connected components have the full space as their convex hull.
Note how this condition fails in the case of immersions of circles into lines:
in that case the complement of the origin has two connected components,
neither of whose convex hulls is the whole line.

Returning to the general case with $\Rel\subset J^1(E, F)$, one says
that $\Rel$ is ample if, for every $(x, y, \varphi) ∈ J^1(E, F)$ and every
hyperplane $H$ in $E$, the set:
\[
  \{\psi \in \L{E}{F} \;\left|\; \rst{\psi}{H} = \rst{\varphi}{H}\text{ and } (x, y, \psi) \in \Rel\right.\}
\]
is ample in the affine space of linear maps
that coincide with $\varphi$ on $H$. This set is called
the slice of $\Rel$ associated to $(x, y, \varphi)$ and $H$. In fact,
it is more convenient to use a variant of this definition which assumes more data than just $H$,
and so obtain slices that are subsets of $F$ rather than $\L{E}{F}$.
Instead of considering hyperplanes in $E$, we consider what we call \emph{dual pairs}.
This is a pair $p = (\pi, v) \in E^* \times E$ such that $\pi(v) = 1$. Using $p$, we
define the update map $\Upd{p}{}{} \co \L{E}{F} \to F \to \L{E}{F}$ sending
$(\varphi, w)$ to the linear map $\Upd{p}\varphi w$ which coincides with
$\varphi$ on $\ker \pi$ and sends $v$ to $w$.
%In symbols: $\Upd{p}\varphi w = \varphi + (w - \varphi(v)) \otimes \pi$.

The hyperplane associated to $p$ is $\ker \pi$ and the corresponding slice is
affine-isomorphic to
\[
  \Rel((x, y, \varphi), p) :=
  \left\{w \in F \;\left|\; \left(x, y, \Upd{p}\varphi w\right) \in \Rel\right.\right\}.
\]
Before stating the main theorem, we note that what we called the $h$-principle
above is only the weakest possible variation on this theme. One can add
constraints along various sets, proximity constraints, and parameters.
In the statement below, $P$ is the parameter space, it would be one-dimensional
in the example of homotopies of immersions. A family of formal solutions of
$\Rel$ parametrised by $P$ is a smooth map
$\F : P \times E \to F \times \L{E}{F}$ such that each $\F_p := \F(p, \cdot)$
is a formal solution.

\begin{theorem}[Local version of Gromov's $h$-principle for open and ample first order relations]
  \label{thm:gromov}
 Let $E$, $F$ and $P$ be finite dimensional real normed vector spaces.
 Let $\Rel$ be an open and ample set in $J^1(E, F)$. Let $C$ and $K$ be sets
 in $P \times E$ such that $C$ is closed and $K$ is compact.
 Let $\F_0$ be a family of formal solutions of $\Rel$ parametrised by $P$ such
 that $(\F_0)_p$ is holonomic at $x$ for all $(p, x)$ near $C$.

 For every positive real number $\e$, there exists a family of formal solutions
 $\F$ parametrised by $\RR \times P$ such that:
 \begin{itemize}
   \item $∀ p\, x, \F((0, p), x) = \F_0(p, x)$,
   \item $∀ (p, x) \text{ near } C, ∀ t, \F((t, p), x) = \F_0(p, x)$,
   \item $∀ t\, p\, x, \left\|\pr_F\big(\F((t, p), x) - \F_0(p, x)\big)\right\| ≤ \e$,
   \item $∀ (p, x) \text{ near }  K, \F_{(1, p)} \text{ is holonomic at } x$.
 \end{itemize}
 where $\pr_F : F \times \L{E}{F} \to F$ is the projection to the first factor.
\end{theorem}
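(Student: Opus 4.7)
The plan is to invoke Theillière's convex integration, as mentioned in the abstract. The proof reduces to a one-dimensional ``corrugation'' step, iterated along a basis of $E$ via a dimension induction and glued together using cutoffs that control the closed and compact constraints.

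First I would establish a one-step local lemma. Fix a dual pair $p = (\pi, v) \in E^* \times E$, a family of formal solutions $(f, \varphi)$, and a smooth family of loops $\gamma \co P \times E \times \S \to F$ with $\gamma_{q, x}$ taking values in the slice $\Rel((x, f_q(x), \varphi_q(x)), p)$ and $\avg{\gamma_{q, x}} = \varphi_q(x)(v)$. The corrugated map $f_q(x) + N^{-1} \Gamma_{q, x}(N \pi(x))$, where $\Gamma$ is a loop primitive of $\gamma - \avg{\gamma}$, has derivative in the $v$-direction uniformly close to $\gamma_{q, x}(N\pi(x))$. Hence its $1$-jet, updated via $\Upd{p}{}{}$ to use $\gamma$ in the $v$-slot, lands in $\Rel$ for $N$ large (by openness) while staying within $O(1/N)$ of the original in the $F$-component.

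Ampleness of $\Rel$ produces the loops: each slice $\Rel((x, y, \varphi), p) \subset F$ is ample, so $\varphi(v)$, which lies in the convex hull of its connected component, is the average of a loop supported in that component. A parametric selection theorem, combined with a partition of unity, promotes this to a smooth family $\gamma_{q, x}$ on a neighbourhood of the closure of $K \setminus C$. The global argument then chooses dual pairs $(\pi_1, v_1), \ldots, (\pi_n, v_n)$ with $\ker \pi_1 \cap \cdots \cap \ker \pi_n = 0$ and performs $n$ corrugation steps in succession, each occupying a sub-interval of $t \in [0, 1]$. A smooth cutoff supported on a neighbourhood of $K$ and vanishing on a neighbourhood of $C$ modulates each correction, ensuring the family agrees with $\F_0$ near $C$; reparametrising $t$ smoothly on $\RR$ and setting $\F((0, p), x) = \F_0(p, x)$ gives the required parametrisation on $\RR \times P$.

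The main obstacle is the bookkeeping across successive corrugations: the $i$-th step perturbs derivatives in the earlier directions $v_j$ ($j < i$) through the nonzero values of $\pi_j$ on the corrugation term, so progress made in previous steps is disturbed. Keeping these perturbations small enough to remain inside the open relation $\Rel$, and the total $F$-component drift below $\e$, forces a careful quantitative choice of frequencies $N_i$ and an inductive invariant tracking how closely the $v_j$-derivatives approximate their targets. This is precisely where Theillière's explicit estimates do their work, and the formalisation will consist largely of translating those estimates into a Lean argument that threads the induction together uniformly in the parameter $q \in P$.
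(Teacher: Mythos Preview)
Your outline has the right architecture---Theilli\`ere corrugations iterated along a basis, with cutoffs handling $C$ and $K$---but the one-step lemma contains an error that would stop the induction from making any progress. You prescribe the loop average as $\avg{\gamma_{q,x}} = \varphi_q(x)(v)$; it must instead be the \emph{actual} derivative $Df_q(x)v$. With your choice, the $v$-derivative of the corrugated map is
\[
  Df_q(x)v + \bigl(\gamma_{q,x}(N\pi(x)) - \avg{\gamma_{q,x}}\bigr) + O(1/N)
  \;=\; Df_q(x)v - \varphi_q(x)v + \gamma_{q,x}(N\pi(x)) + O(1/N),
\]
which matches the updated $v$-slot $\gamma_{q,x}(N\pi(x))$ only if $Df_q(x)v = \varphi_q(x)v$, i.e.\ only if the section was already holonomic in direction $v$. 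The correct roles are: the loop is \emph{based} at $\varphi_q(x)v$ (so the $t=0$ endpoint of the homotopy of formal solutions recovers $\F_0$) and has \emph{average} $Df_q(x)v$ (so the corrugation cancels the old derivative and installs $\gamma$). This is also exactly where ampleness enters: $Df_q(x)v$ need not lie in the slice $\Rel(\sigma,p)$ at all, but ampleness guarantees it lies in the convex hull of the connected component of $\varphi_q(x)v$ in that slice, which is precisely the hypothesis needed for the loop construction. Your sentence ``$\varphi(v)$, which lies in the convex hull of its connected component, is the average'' shows the confusion---$\varphi(v)$ is already \emph{in} the slice, so no convexity is required to realise it as an average; ampleness is doing work only because the desired average $Df(x)v$ is generically outside the slice.

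A second, structural difference: the paper does not carry the parameter space $P$ through the corrugation and induction. Instead it first reduces to a non-parametric problem by passing to the pulled-back relation $\Rel^P := \Psi^{-1}(\Rel) \subset J^1(E \times P, F)$ and proving once (\Cref{lem:parametricity}) that $\Rel^P$ is again open and ample. The entire inductive argument (\Cref{lem:inductive_step}) is then stated for a single formal solution on a single vector space, and parametricity is recovered afterwards via \Cref{lem:parametricity_for_free}. Your plan of threading $q \in P$ through every estimate would also work, but the reduction avoids duplicating the whole apparatus in a parametric form.
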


This paper describes our formalisation\selink{local/parametric_h_principle.lean\#L296} of the above theorem.
We also explain how to apply it to obtain Smale's theorem whose formalised
statement follows (the notation \lstinline{I} refers to the unit interval in $\R$
and we omit the declaration of the fact that \lstinline{E} is a 3-dimensional
real vector space equipped with an inner product).\selink{local/sphere_eversion.lean\#L426}

\begin{lstlisting}
theorem sphere_eversion_of_loc :
  ∃ f : ℝ → E → E,
    (𝒞 ∞ ↿f) ∧
    (∀ x ∈ 𝕊², f 0 x = x) ∧
    (∀ x ∈ 𝕊², f 1 x = -x) ∧
    ∀ t ∈ I, sphere_immersion (f t)
\end{lstlisting}

\subsection{Convex integration}%
\label{sub:convex_integration}

The proof of \Cref{thm:gromov} uses convex integration, a technique
invented by Gromov c.1970, inspired by the $C^1$ isometric
embedding results of Nash and the original proof of flexibility of immersions.
This term is pretty vague however, and there are several different
implementations. The most recent, and by far the most efficient,
is Mélanie Theillière's corrugation process from \cite{Theilliere22}.
We opted for this implementation and describe it briefly below.

Let $f$ be a smooth map from $\mathbb{R}^n$ to $\mathbb{R}^m$ and suppose we want to turn $f$ into a
immersion. We tackle each partial derivative in turn.
We first make sure $\partial_1f(x) := \partial f(x)/\partial x_1$ is non-zero (for all $x$).
Then we make sure $\partial_2 f(x)$ is not collinear with
$\partial_1f(x)$. Then we make sure $\partial_3f(x)$ is not in the
plane spanned by the two previous derivatives, etc until
all $n$ partial derivatives are everywhere linearly independent.

More generally, if we want $f$ to satisfy some relation (not necessarily
the relation of immersions) then for each $j$ between $1$ and
$n$, we want $\partial_j f(x)$ to live in some open subset
$\Omega_x \subset \mathbb{R}^m$. The brilliant idea is to assume there is a
smooth family of loops
$\gamma \co \mathbb{R}^n \times \SS^1 \to \mathbb{R}^m$ such
that each $\gamma_x$ takes values in $\Omega_x$, and has average value
$\avg{\gamma}_x = \partial_j f(x)$. Obviously such loops
can exist only if $\partial_jf(x)$ is in the convex hull of $\Omega_x$,
hence the name convex integration, and we will see this condition is almost
sufficient. This is where the ampleness condition enters
the story. In the immersion case, this condition holds as soon as $m > n$
because, from the above description, each $\Omega_x$ is the
complement of a subspace with codimension at least two.

For some large positive $N$, we replace $f$ by the new map
\[
  x \mapsto f(x) + \frac1N  \int_0^{Nx_j} \left[\gamma_x(s) - \avg{\gamma}_x\right]ds.
\]
The key is that, provided $N$ is large enough,
we have achieved $\partial_j f(x) \in \Omega_x$, almost without modifying
derivatives $\partial_i f(x)$ for $i \neq j$, and almost without moving
$f(x)$.

This is a very local construction, and it isn't obvious how the absence of
homotopical obstruction, embodied by the existence of a formal solution, should
enter the discussion. The answer is that it essentially provides a way to
coherently choose base points for the $\gamma_x$ loops.

\subsection{Lean and \mathlib}%
\label{sub:lean_and_mathlib}

We formalised all this using the Lean 3 theorem prover \cite{lean},
developed principally by Leonardo de Moura since 2013 at Microsoft
Research. Lean implements a version of the calculus of inductive constructions
\cite{cic}, with quotient types, non-cumulative universes, and proof
irrelevance. Our formalisation is built on top of \mathlib \cite{mathlib-paper},
a library of formalised mathematics with more than 250 contributors and
almost one million lines of code as of September 2022. In return, about two thirds of
the code we wrote for this project has been, or will be,
added to \mathlib. As with many recent projects using \mathlib, the most crucial
property is that library is extremely tightly integrated. We needed to make
simultaneous use of its theories of topology, affine geometry, linear algebra, calculus, and integration.
We also emphasise that \mathlib follows the standard mathematical
practice where nothing is made more complicated than it needs to be for the sake
of avoiding the law of excluded middle or the axiom of choice.

\paragraph{Paper outline: }
\Cref{sec:preliminaries} discusses topics that are pervasive
and not specific to a part of the project.
\Cref{sec:constructing_loop_families} provides the supply of loops.
\Cref{sec:convex_integration} then discusses the proof of the main theorem,
including the key corrugation construction, and \Cref{sec:sphere_eversion}
deduces Smale's sphere eversion theorem.

We will link to specific results in \mathlib and the \textsf{sphere eversion project}
using this icon.\href{https://github.com/leanprover-community/sphere-eversion}{\link}
We will use static links to the version of these repositories when we wrote this paper.

\pagebreak % Prevents orphaned section heading at bottom of left-hand column.

\section{Preliminaries}%
\label{sec:preliminaries}

\subsection{Proximity and filters}%
\label{sub:proximity_and_filters}

The kind of geometrical constructions that we consider in this paper frequently
require consideration of properties which hold near a point or near a set.
Informally, this means that there is an unspecified neighbourhood of that point or set where
the property holds, and that this neighbourhood can shrink finitely many times
during the construction. Formally, this idea is perfectly captured by the theory of filters.
This was already
noted in \cite{isabelle_topology} but we will review it here since it
is relevant to the future development of differential topology and since it gives us the
opportunity to advertise a slightly different point of view.

In \cite{isabelle_topology}, filters are mostly seen as providing generalised
bounded quantifiers. We will explain this below, but more generally we see
filters on a type $X$ as generalised sets of $X$. For instance, given a
point $x₀$ in a topological space $X$, we may wish to consider the generalised set
of ``points that are close to $x_0$''. For the types of natural or real
numbers we need the generalised sets of ``very large numbers''. These
generalised sets are called filters on $X$. There is an injective map from the
type $\PS{X}$ of ordinary sets whose elements have type $X$ into the type
$\Fi{X}$ of filters on $X$. One can extend the inclusion order relation to
an order relation $\leq$ on $\Fi{X}$. Given a function $f \co X \to Y$, we can
also extend the associated direct image map from $\PS{X} \to \PS{Y}$ to
$\Fi{X} \to \Fi{Y}$ (and extend also the inverse image map). Using this order relation
and direct image operation, one can for instance say that a sequence $u \co \NN
\to X$ converges to a point $x_0$ if the direct image under $u$ of the filter
of very large natural numbers is contained in the filter of points that are
close to $x_0$.

For any predicate $P$ and any filter $F$ on $X$, one can form the statement
$\forallf{x}{F}, P(x)$ claiming that ``$P(x)$ holds for all $x$ in $F$'', generalising
the statement $∀ x ∈ A, P(x)$ that one can form for any $A ∈ \PS{X}$. However,
contrasting with ordinary sets, the symbols $x ∈ F$ don't mean anything
by themselves. For instance we cannot say that a given real number is very large but we
can say that for all very large real numbers $x$ we have $1/x < 10^{-6}$.
There is also an existential quantifier version which is less crucial but still
nice to keep things symmetric. For instance we can prove that a set $A$ in a
topological space $X$ is open if and only if for every $x$ in $A$, every point
close to $x$ is in $A$. Similarly $A$ is closed if and only if, for every $x$ in $X$,
if there exists a point of $A$ close to $x$ then $x$ is in $A$.

Filters are implemented as sets of sets satisfying three conditions. Thus
$F ∈ \PS{\PS{X}}$ is a filter if:\mllink{order/filter/basic.lean\#L89}
\begin{itemize}
  \item $X \in F$,
  \item $\forall U\, V\in \PS{X}, \; U \in F \text{ and } U ⊂ V \implies V \in F$,
  \item $\forall U\, V\in \PS{X}, \; U \in F \text{ and } V \in F \implies U \cap V \in F$.
\end{itemize}
For instance the filter of points that are close to a point $x_0$ in a topological space
$X$ is the filter $\nhds{x_0}$ of neighbourhoods of $x_0$. The inclusion of
$\PS{X}$ into $\Fi{X}$ sends a set $A$ to the set of sets that contain $A$.
The order relation $\Fi{X}$ is opposite to the order induced by $\PS{\PS{X}}$ (so that
the inclusion $\PS{X} \hookrightarrow \Fi{X}$ is indeed order preserving) and the direct image
operation associated to $f \co X \to Y$ sends $F \in \Fi{X}$ to the set of $U \in \PS{Y}$
such that $f^{-1}(U) \in F$. The statement $\forallf{x}{F}, P(x)$ is implemented as
$\{x | P(x)\} \in F$.

The conditions in the above list translate directly into conditions on the
$\forallf{\dots}{\dots}$ generalised bounded quantifier. Given predicates $P$
and $Q$ on $X$, we get in order:
\begin{itemize}
  \item if $∀ x \in X, P(x)$ then $\forallf{x}{F}, P(x)$,
  \item if $∀ x \in X, P(x) \implies Q(x)$ then\\
    $\big(\forallf{x}{F}, P(x)\big) \implies \big(\forallf{x}{F}, Q(x)\big)$,
  \item if $\forallf{x}{F}, P(x)$ and $\forallf{x}{F}, Q(x)$ then\\
    $\forallf{x}{F}, \big(P(x) \text{ and } Q(x)\big)$.
\end{itemize}

Of course filters seen as generalised sets do not share \emph{all}
properties of ordinary sets. Otherwise one could easily prove the inclusion to
be a bijection and nothing would be gained. For instance the conjunction
property listed above can obviously be extended to conjunctions involving a finite
number of predicates, but not to infinite conjunctions. The
infinite-conjunction extension would imply
that the image of $\PS{X}$ is all of $\Fi{X}$. It would also break the intended
meaning in the case of neighbourhoods.
%Indeed we can say that, for any given
%natural number $n$ the predicate $P_n$ on $ℝ$ defined by $P_n(x)$ if $|x| < 1/2^n$
%holds for $x$ in $\nhds{0}$. But the infinite conjunction $∀ n, P_n(x)$ doesn't.
Furthermore, generalised quantifiers do not commute: given filters $F$ and $G$ on $X$
and $Y$ and given a predicate $P$
on $X \times Y$, the statements $\forallf{x}{F},\, \forallf{y}{G},\; P(x, y)$
and $\forallf{y}{G},\, \forallf{x}{F},\; P(x, y)$ are \emph{not} equivalent in
general.
%For instance it is true that for all very large $n$ and all $x$ close to $0$ in $ℝ$
%we have $|x| < 1/2^n$ but it is false that for all $x$ close to $0$ in $ℝ$
%and all very large $n$ we have $|x| < 1/2^n$.

The moral of this story is that even in formal mathematics one may
write things like ``for all $x$ close to $x_0$, ...'', which might na\"ively sound a bit
sloppy. Better yet, formalising such statements yields a neat theory, together with
precise lemmas saying how to manipulate its statements. In our opinion, this theory
provides an elegant means of capturing our informal intuition.

\subsection{Continuity and differentiability lemmas}%
\label{sub:continuity_lemmas}

We often had to prove that functions were continuous or smooth
in a certain region.
In \mathlib there is a \lstinline{continuity} tactic
that is able to do this for simple functions,
but this tactic can be slow and is often
unable to prove more complicated continuity conditions.
Therefore we usually proved continuity and smoothness conditions manually.

We noticed that the way one formulates continuity and smoothness lemmas
greatly affects the convenience of using them.
For example, consider the statement that addition on $\RR$ is continuous.
We can write this as \lstinline{continuous_add : continuous (λ p : ℝ × ℝ, p.1 + p.2)}.
We would like to be able to use this as follows:
\begin{lstlisting}
example : continuous (λ x : ℝ, x + 3) :=
continuous_add.comp
  (continuous_id.prod_mk continuous_const)
\end{lstlisting}
but Lean rejects this.

To understand why, we remind the reader about Lean's
projection notation (discussed further in ~\cite{Lewis2019Hensel}).
E.g.\\
\lstinline{continuous_id.prod_mk continuous_const} is short for
\lstinline{continuous.prod_mk continuous_id continuous_const}.
We also recall some relevant lemmas from \mathlib:
\begin{lstlisting}
continuous_id : continuous (λ x, x)
continuous_const : continuous (λ x, a)
continuous.comp : continuous g →
  continuous f → continuous (g ∘ f)
continuous.prod_mk : continuous f →
  continuous g → continuous (λ x, (f x, g x))
\end{lstlisting}
The problem is that the elaborator gets confused when unifying the function \lstinline{λ x : ℝ, x + 3}
in the statement with the partially elaborated function
\lstinline{(λ p : ℝ × ℝ, p.1 + p.2) ∘ λ x : ℝ, (x, 3)}
in the proof. We can fix this by giving more information.
The following example is accepted, but is cumbersome:
\begin{lstlisting}
example : continuous (λ x : ℝ, x + 3) :=
continuous_add.comp
  (continuous_id.prod_mk continuous_const :
  continuous (λ x : ℝ, (x, (3 : ℝ))))
\end{lstlisting}
% Furthermore, Lean will raise error on the following partial proof.
% \begin{lstlisting}
% example : continuous (λ x : ℝ, x + 3) :=
% continuous_add.comp _
% \end{lstlisting}
The situation is much nicer if we replace our original continuity statement with:
\begin{lstlisting}
continuous.add : continuous f →
  continuous g → continuous (λ x, f x + g x)
\end{lstlisting}
Using this, the following slick proof succeeds:
\begin{lstlisting}
example : continuous (λ x : ℝ, x + 3) :=
continuous_id.add continuous_const
\end{lstlisting}
% Furthermore, we get useful new subgoals if we write\\ % workaround for long line
% \lstinline{continuous_id.add _} or \lstinline{continuous.add _ _},
% since Lean asks us to prove the appropriate subgoals.

In general, instead of stating that a certain function (like addition) is continuous,
we prefer to state that if we apply the function to continuous arguments,
then the resulting function is continuous. This becomes even more important when
the source or target spaces become (products of) spaces of continuous linear maps
and the relevant map is evaluating or composing continuous linear maps since such
cases require a lot more work from the elaborator.
The same holds for many other properties of functions,
like differentiability, smoothness, measurability, and continuity at a point or in a set.
We use this technique throughout our formalisation and mostly also in \mathlib.

% if we have room I can give the `strans` example here, but I expect we would cut that anyway.

We had to prove various lemmas showing that functions defined piecewise were continuous.
A general version of such a lemma (that follows the preceding principle) is the following:\mllink{topology/continuous_on.lean\#L1049}
\begin{lstlisting}
lemma continuous_if {p : α → Prop}
  (hp : ∀ a ∈ frontier {x | p x}, f a = g a)
  (hf : continuous_on f (closure {x | p x}))
  (hg : continuous_on g (closure {x | ¬p x})) :
  continuous (λ a, if p a then f a else g a)
\end{lstlisting}
We will use a special case of this lemma
where \lstinline {p x} is an inequality between two continuous functions%
\mllink{topology/algebra/order/basic.lean\#L532}
in \Cref{lem:satisfied_or_refund}.

\subsection{Convexity}
\label{sub:convexity}
Affine spaces over ordered scalars have a much richer structure than those over
general scalars. An especially important concept that arises in the presence of
a scalar ordering is that of convexity.

For our application, we were concerned with convexity of subsets of vector
spaces over the real numbers. Given a real vector space $F$ together with a
subset $S \subseteq F$, one says $S$ is convex if for all $x$, $y$ in $S$, the
line segment joining $x$ and $y$ is contained in $S$.
%More precisely:
%\begin{align*}
%  tx + (1 - t)y \in S,
%\end{align*}
%for all $ 0 \le t \le 1$. For example a disc is convex but a horseshoe is not.

\mathlib contains a substantial library of results about convexity which we
found especially useful. There are various ways of defining convexity. In fact
\mathlib's definition\mllink{analysis/convex/basic.lean\#L43} changed several
times during the course of our work and on several occasions we had to make
adjustments in our work to cater for this. Such adjustments became easier as
the convexity API stabilised and we regard this as encouraging empirical
evidence in favour of using \mathlib as a dependency.

While \mathlib already contained key concepts such as the convex hull%
\mllink{analysis/convex/hull.lean\#L35} of a set,
of course not every result we needed about convexity theory already existed.
Missing results were added directly to \mathlib. Examples include
Carathéodory's lemma,\mllink{analysis/convex/caratheodory.lean\#L143} a characterisation of
when a convex set has non-empty
interior,\mllink{analysis/normed_space/add_torsor_bases.lean\#L159} and numerous
minor technical lemmas. These are all now available to any future \mathlib
consumer.

\subsection{Barycentric coordinates and their smoothness}
\label{sub:barycentric}
An affine basis for a $d$-dimensional real vector space $F$ is a set of $d+1$
points $p_0, p_1, \ldots, p_d$ in $F$ such that any $q$ in $F$ can be written
as $q = \sum w_i p_i$ for unique scalars $w_i$ such that $\sum w_i = 1$.
The scalars $w_i$ are known as the barycentric coordinates of $q$ with respect
to the basis $p_i$.

Barycentric coordinates are the natural coordinates when working with
affine-invariant concepts such as convexity. Several of our key constructions
required the use of barycentric coordinates and so we added a
definition\mllink{linear_algebra/affine_space/basis.lean\#L92} and
corresponding API to \mathlib. Amongst the basic facts we needed to add were:
\begin{itemize}
  \item any real vector space\footnote{Actually we proved this for affine space
  over any division ring.} has an affine
  basis,\mllink{linear_algebra/affine_space/basis.lean\#L367}
  \item a point lies in the convex hull of an affine basis
  iff all of its barycentric
  coordinates are  non-negative,\mllink{analysis/convex/combination.lean\#L441}
  \item a point lies in the \emph{interior} of the convex hull of an affine
  basis iff all of its barycentric coordinates are
  strictly positive.\mllink{analysis/normed_space/add_torsor_bases.lean\#L57}
\end{itemize}

A slightly trickier result we needed concerned the smoothness of barycentric coordinates. Let
$A \subseteq F^{d+1}$ be the set of affine bases of $F$. The result we needed
was that the map $F \times A \to \R^{d+1}$, $(q, p_0, \ldots, p_d) \mapsto (w_0, \ldots, w_d)$
is smooth.\selink{to_mathlib/smooth_barycentric.lean\#L119}
Note that this is joint smoothness on $F \times A$ so both the point in $F$ and
the basis in $A$ are varying simultaneously. This is an example of a
geometrically obvious fact that wouldn't receive any explanation in an informal
context.

Although one could prove this result using the language of the exterior algebra,
we opted for a more elementary proof using determinants. At the time \mathlib's theory of
determinants was more complete than its theory of the exterior algebra (see also \cite{Wieser2022}).
The key ingredients were:
\begin{itemize}
  \item a barycentric change-of-basis formula,\mllink{linear_algebra/affine_space/basis.lean\#L288}
  \item the smoothness of the determinant
  function.\selink{to_mathlib/smooth_barycentric.lean\#L83}
\end{itemize}

\subsection{Partitions of unity}%
\label{sub:partitions_of_unity}

Partitions of unity are an important tool in differential topology. They often allow one
to patch a family of local solutions to some problem into the desired global solution. They are used
so often that it is challenging to render a complete informal presentation without becoming
repetitive. This kind of
repetition is especially bad from a formalisation point of view so we extracted an ``induction principle''
associated to partitions of unity.

\mathlib already contained a sizeable library of results about partitions of unity, including versions for
normal topological spaces and versions for smooth manifolds. Notably,
there is no version specialised to normed vector spaces and so one must regard the vector
spaces as manifolds in order to invoke \mathlib's API for partitions of unity.
Technically this comes at the cost of making partitions of unity for vector spaces
depend on a certain amount of \mathlib's differential geometry library. However
from the user's point of view, this is completely hidden as can be seen in the
lemmas below.

In the first lemma we want to construct maps $f \co E \to F$ such that $\forall x, P(x, f(x))$
for some predicate $P$ on $E \times F$. The assumptions are a convexity assumption on $P$
and the existence of local solutions. Partitions of unity appear in the proof but not in
the statement.\selink{to_mathlib/partition.lean\#L283}

\begin{lemma}
  \label{lem:exists_cont_diff_of_convex}
  Let $E$ and $F$ be real normed vector spaces. Assume that $E$ is finite
  dimensional. Let $P$ be a predicate on $E \times F$ such that for all $x$ in
  $E$, $\{y ~|~ P (x, y) \}$ is convex. Let $n$ be a natural number or $+\infty$.
  Assume that every $x$ has a neighbourhood $U$ on which there exists a $\C^n$
  function $f$ such that $\forall x ∈ U, P(x, f(x))$. Then there is a global
  $\C^n$ function $f$ such that $\forall x, P(x, f(x))$.
\end{lemma}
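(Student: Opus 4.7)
The plan is to reduce the lemma to a smooth partition of unity argument, using the convexity of the fibres of $P$ to glue local solutions into a global one.

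First, I would unpack the local existence hypothesis to produce, using choice, an open cover $\{U_i\}_{i\in I}$ of $E$ together with $\C^n$ functions $f_i \co U_i \to F$ such that $P(x, f_i(x))$ holds for every $x\in U_i$. Since $E$ is a finite dimensional real normed vector space, it is second countable and admits smooth partitions of unity (this is exactly where one pays the technical price of going through \mathlib's manifold API, as the excerpt warns). I would invoke the existence of a $\C^\infty$ partition of unity $\{\rho_i\}_{i\in I}$ subordinate to the cover: each $\rho_i$ is smooth, nonnegative, has support contained in $U_i$, the family is locally finite, and $\sum_i \rho_i \equiv 1$.

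Next I would define the candidate
\[
  f(x) := \sum_{i\in I} \rho_i(x)\, f_i(x),
\]
where for each $i$ the product $\rho_i(x) f_i(x)$ is understood to be extended by $0$ outside $U_i$ (this extension is $\C^n$ because $\rho_i$ vanishes on a neighbourhood of $\partial U_i$, by the subordination condition on supports). Local finiteness of the partition makes this sum locally a finite sum of $\C^n$ functions, so $f$ is globally $\C^n$. Pointwise, at each $x\in E$, the sum is indexed by the finite set $J(x) := \{i \mid \rho_i(x) > 0\}$; for every $i\in J(x)$ we have $x\in U_i$ and hence $f_i(x)$ lies in the convex set $S_x := \{y \mid P(x,y)\}$. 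Since $\sum_{i\in J(x)} \rho_i(x) = 1$ with nonnegative weights, $f(x)$ is a convex combination of elements of $S_x$, and convexity of $S_x$ yields $P(x, f(x))$.

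The routine parts are the convexity bookkeeping and the verification that $f$ has the expected regularity. The genuine obstacle is the infrastructure: turning a "neighbourhood with a local solution" hypothesis into a clean subordinate cover, invoking a smooth partition of unity on the vector space $E$ through its manifold structure, and handling the extension-by-zero and local-finiteness arguments so that the resulting sum is manifestly $\C^n$. This is precisely the sort of repetitive partition-of-unity argument the authors mention wanting to package, and I expect the cleanest path is to prove the lemma once and for all in the form of an induction principle that hides these bookkeeping steps behind a single high-level API call, as the excerpt indicates they have done.
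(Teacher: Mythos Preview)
Your proposal is correct and is exactly the argument the paper has in mind: the text states explicitly that ``partitions of unity appear in the proof but not in the statement'', and your convex-combination gluing of local solutions via a smooth subordinate partition of unity is precisely that standard argument. The infrastructure issues you flag (accessing smooth partitions of unity on a vector space through the manifold API, extension by zero, local finiteness) are also the ones the paper highlights.
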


As a simple application of this lemma, one can prove that, given any function
$f \co E \to \RR$, if $f$ is locally bounded below by a positive constant then
$f$ is globally bounded below by a smooth positive function. Indeed the convexity
assumption is satisfied because for each $x$ the interval $(0, f(x)]$ is convex
and the local boundedness assumption provides (constant) local solutions.

More generally, this lemma allows one to deduce useful approximation results
that would be painful to obtain using obvious alternatives such as Stone-Weirstrass or convolution.
In particular, we will use the following.\selink{to_mathlib/partition.lean\#L328}
\begin{corollary}
  \label{cor:exist_smooth_approximation}
  Given $n$, $E$ and $F$ as in the lemma, let $C$ be a closed set in $E$,
  let $\e : E \to \RR$ be a continuous positive function and let $f \co E \to F$ be
  a continuous function. If $f$ is of class $\C^n$ near $C$ then there exists a function
  $f' \co E \to F$ which is everywhere of class $\C^n$, coincides with $f$ on $C$ and
  such that $\|f'(x) - f(x)\| \le \e(x)$ everywhere.
\end{corollary}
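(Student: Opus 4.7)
The plan is to apply \Cref{lem:exists_cont_diff_of_convex} to an appropriately chosen pointwise predicate. Define $P \co E \times F \to \mathrm{Prop}$ by
\[
  P(x, y) \iff \|y - f(x)\| \le \e(x) \;\text{ and }\; \big(x \in C \implies y = f(x)\big).
\]
The slice $\{y \mid P(x, y)\}$ is the singleton $\{f(x)\}$ when $x \in C$ and the closed ball $\overline{B}(f(x), \e(x))$ otherwise; in either case it is convex, so the convexity hypothesis of the lemma holds. A global $\C^n$ function $f'$ satisfying $\forall x, P(x, f'(x))$ is exactly the function claimed by the corollary.

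It remains to verify the local solvability hypothesis at every $x_0 \in E$. I would split into two cases. First, if $x_0 \in C$, the assumption that $f$ is $\C^n$ near $C$ provides a neighbourhood $U$ of $x_0$ on which $f$ itself is of class $\C^n$; taking $g := \rst{f}{U}$ gives $P(x, g(x))$ trivially for all $x \in U$. Second, if $x_0 \notin C$, closedness of $C$ yields an open neighbourhood $U_1$ of $x_0$ disjoint from $C$, so the implication in $P$ is vacuous on $U_1$; continuity of $f$ and $\e$ together with $\e(x_0) > 0$ allows me to shrink $U_1$ to a neighbourhood $U$ on which $\|f(x) - f(x_0)\| \le \e(x)$ for all $x \in U$, and then the constant function $g \equiv f(x_0)$ is a smooth local solution.

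Invoking \Cref{lem:exists_cont_diff_of_convex} then produces the desired $f' \co E \to F$ of class $\C^n$ with $P(x, f'(x))$ everywhere, i.e.\ $f' = f$ on $C$ and $\|f'(x) - f(x)\| \le \e(x)$ for all $x$.

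The only slightly delicate step is the choice of predicate: one must build the equality-on-$C$ constraint and the $\e$-closeness constraint into a single predicate whose slices are convex, without any case distinction that would break convexity. Once this is done correctly the local solvability on the open complement of $C$ is straightforward continuity, and on (a neighbourhood of) $C$ it is nothing but the hypothesis that $f$ is already $\C^n$ there.
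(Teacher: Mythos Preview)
Your proof is correct and is precisely the intended argument: the paper states the result as a corollary of \Cref{lem:exists_cont_diff_of_convex} without spelling out the details, and your choice of predicate $P(x,y) \iff \|y - f(x)\| \le \e(x) \text{ and } (x \in C \Rightarrow y = f(x))$, together with the two-case verification of local solvability, is exactly how that reduction goes through.
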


The abstraction captured in \Cref{lem:exists_cont_diff_of_convex} also exposes
a mathematical awkwardness that exists for smooth maps, both formally and informally.
The problem is that smooth manifolds and maps do not form a
cartesian closed category; equivalently, one cannot curry smooth functions.

The point is that, even for finite-dimensional vector spaces $E$, $F$ (where there are
no topological issues) there is no natural smooth
structure on the space of smooth maps $E \to F$. For our work, we needed a variant
of \Cref{lem:exists_cont_diff_of_convex} for a product space $E = E_1 \times E_2$.
One would like to be able to deduce this from \Cref{lem:exists_cont_diff_of_convex}
by taking $E$ and $F$ to be $E_1$ and $E_2 \to F$ respectively but there is no
norm on $E_2 \to F$ and so one cannot talk about smooth maps taking values there.

Even the informal mathematics community has not converged on a solution to this
problem\footnote{For example though the category of Fr\"olicher spaces is Cartesian
closed, it is hardly used and is not locally Cartesian closed.}. We thus used
partitions of unity to prove a second lemma, similar to
\Cref{lem:exists_cont_diff_of_convex} except for products
$E_1 \times E_2$.\selink{to_mathlib/partition2.lean\#L121} It is notable that
formalism forced us to confront this usually harmless fact about the category
of smooth manifolds.

\section{Constructing loop families}%
\label{sec:constructing_loop_families}

\subsection{Loop families}%
\label{sub:loop_families}

The corrugation procedure outlined in \Cref{sec:introduction} requires having
families of loops taking values in some prescribed set, with some prescribed
base point, and with prescribed average value. All those loops should also come
with a global homotopy from the family of constant loops at the prescribed base
points. Everything is gathered in the following proposition
where $\Omega$ prescribes values, the map $\beta$ prescribes base points, and $g$
prescribes the average values. The parameter $t \in \RR$ is the homotopy parameter
while $s \in \SS^1$ is the loop parameter. We fix finite-dimensional
normed vector spaces $E$ and $F$ over $\RR$.

\begin{proposition}
  \label{prop:loops}
  Let $K$ a compact set in $E$. Let $\Omega$ be an open set in $E \times F$.

  Let $\beta$ and $g$ be smooth maps from $E$ to $F$.
  Write $\Omega_x := \{ y \in F \mid (x, y) \in \Omega\}$, assume that
  $\beta(x) \in \Omega_x$ for all $x$, and that $g(x) = \beta(x)$ for $x$ near
  $K$.
  Lastly suppose that for every $x$, $g(x)$ is in $\CH(\conn{\Omega_x}{\beta(x)})$,
  the convex hull of the connected component of $\Omega_x$ containing $\beta(x)$,
  then there exists a smooth family of loops
  \[
    \gamma \co E \times \RR \times \SS^1 \to F, (x, t, s) \mapsto \gamma^t_x(s)
  \]
  such that, for all $x \in E$, all $t \in \mathbb{R}$ and all  $s \in \SS^1$,
  \begin{itemize}
    \item
      $\gamma^t_x(s) \in \Omega_x$,
    \item
      $\gamma^t_x(s) = \beta(x)$ when $t = 0$ or $s = 0$ or $x$ is near $K$,
    \item
      the average of $\gamma^1_x$ is $g(x)$.
  \end{itemize}
\end{proposition}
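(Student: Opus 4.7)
The plan is to combine a concrete local construction of loops with a global gluing argument based on \Cref{lem:exists_cont_diff_of_convex} and its product variant. At each $x_0 \in E$ we first produce a smooth local family of loops satisfying (a weakened form of) the three conclusions, then patch these local families together.

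For the local construction, fix $x_0 \in E$. If $x_0$ lies in an open set $U$ where $g = \beta$ (which contains a neighbourhood of $K$ by hypothesis), the constant family $\gamma^t_x(s) = \beta(x)$ is already a valid local solution. Otherwise $g(x_0)$ lies in $\CH(\conn{\Omega_{x_0}}{\beta(x_0)})$; by Carathéodory's lemma (\Cref{sub:convexity}) we write $g(x_0) = \sum_{i=0}^{d} w_i y_i$ with the $y_i$ in the connected component of $\beta(x_0)$ in $\Omega_{x_0}$, which is open and therefore path-connected. For $x$ in a neighbourhood of $x_0$, we use smoothness of barycentric coordinates (\Cref{sub:barycentric}) to lift this decomposition smoothly: we obtain smooth $y_i(x)$ and $w_i(x)$ with $\sum w_i(x) y_i(x) = g(x)$, and by openness of $\Omega$ each $y_i(x)$ remains in the connected component of $\beta(x)$. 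We then smoothly connect $\beta(x)$ to each $y_i(x)$ by a path in $\Omega_x$ (locally, in the component-plus-smoothness data), and assemble an explicit loop $\gamma^1_x$ that performs weighted round-trips along these paths so its average is exactly $g(x)$. Interpolating in $t \in \RR$ back to the constant loop at $\beta(x)$, with a cutoff forcing $\gamma^t_x(0) = \beta(x)$ and $\gamma^0_x \equiv \beta(x)$, finishes the local step.

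The global step patches these local families into a single smooth $\gamma$. The subtle point is that the pointwise condition $\gamma^t_x(s) \in \Omega_x$ is open, not convex, so naively taking a convex combination of local loop families via a partition of unity need not keep us inside $\Omega$. The remedy is to apply the partition-of-unity lemma not to the loop values directly but to the finite-dimensional data that generates the loops: the barycentric weights $w_i(x)$, the vertex maps $y_i(x)$, and a smooth choice of connecting paths. Barycentric data live in a simplex, which is convex, and the loop $\gamma$ is then defined by a fixed smooth construction from this data. Alternatively, one invokes the product variant of \Cref{lem:exists_cont_diff_of_convex} discussed at the end of \Cref{sub:partitions_of_unity} with $E_1 = E$ and $E_2 = \RR \times \SS^1$, recasting the problem so that the predicate is convex in the value of $\gamma$. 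The boundary conditions at $t = 0$, at $s = 0$, and near $K$ are enforced by multiplying the nontrivial part of the loop by a smooth cutoff vanishing there, using that the constant loop is a local solution in those regions.

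The main obstacle is this gluing step: finding the right convex reformulation so that the partition-of-unity machinery can be applied without destroying the openness constraint $\gamma^t_x(s) \in \Omega_x$. Once the problem is correctly parameterised (either through barycentric data or through the product variant of \Cref{lem:exists_cont_diff_of_convex}), verifying smoothness, the average identity $\int \gamma^1_x = g(x)$, and the $K$-locality of the construction are largely mechanical, though bookkeeping-heavy. A secondary difficulty is ensuring that all the auxiliary choices (connecting paths in $\Omega_x$, cutoff functions, reparameterisations of $\SS^1$ by the weights $w_i(x)$) can be made smoothly in $x$; openness of $\Omega$ together with compactness arguments on $\SS^1$ and on local neighbourhoods in $E$ make these choices possible.
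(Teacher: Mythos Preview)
Your gluing step has a genuine gap. You correctly identify that convex combinations of loop values need not stay in $\Omega_x$, but neither of your proposed fixes actually works. Applying a partition of unity to ``the barycentric data $(w_i(x), y_i(x))$ and connecting paths'' presupposes a common finite-dimensional parameter space across overlapping charts; in reality each local chart comes with its own affine basis $y_i$ and its own paths, and there is no coherent way to convex-combine these across charts. The alternative, invoking the product variant of \Cref{lem:exists_cont_diff_of_convex}, still requires the target predicate to be convex in the value of $\gamma$, and you give no indication of how to recast the open constraint $\gamma^t_x(s)\in\Omega_x$ as a convex one. A second, related gap is in the local step: your ``weighted round-trips'' loop with average exactly $g(x)$ is not obviously smooth in all variables, since the reparametrisation of $\SS^1$ by the weights $w_i(x)$ has to be smoothed and doing so perturbs the average.

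The paper avoids both difficulties by decoupling the two hard constraints. It first works only with continuous families of loops that \emph{surround} $g(x)$ (an affine basis in the image with $g(x)$ in the open simplex), with no average condition. Such families can be glued across charts by concatenation of loops rather than convex combination: this is the ``satisfied or refund'' contractibility lemma (\Cref{lem:satisfied_or_refund}), which keeps all values inside $\Omega_x$ because concatenation does not create new values. This yields a global continuous surrounding family (\Cref{lem:exists_surrounding_loops}). Only then is the family smoothed via \Cref{cor:exist_smooth_approximation} (surrounding is open, so survives small perturbation), and only after that is the average forced to equal $g(x)$ by the reparametrisation lemma of \Cref{sub:reparametrisation}, which uses delta mollifiers and a separate partition-of-unity argument on the diffeomorphisms of $\SS^1$. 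The key idea you are missing is this concatenation-based gluing of surrounding families, which is what replaces the unavailable convexity.
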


Part of the challenge in formalising this lemma\selink{loops/exists.lean\#L253}
is that there is a lot of freedom in constructing $\gamma$,
which is problematic when trying to do it consistently when $x$ varies. This issue
will be addressed in \Cref{sub:surrounding_families}.

From a formalisation point of view, two questions must be addressed from the
beginning. The first one is how to handle maps defined on $\S$ and the second
one is juggling between curryfied and uncurryfied functions.

On paper the notation $\S$ is already ambiguous. It could refer to the unit
circle in $\RR^2$ (or in $\CC$) or to a quotient of $\RR$, by the subgroup
$\ZZ$ or $2\pi\ZZ$ depending on taste. Note that we need to be able to talk
about continuous and smooth maps defined on $\S$. We ultimately settled on the
following:\selink{loops/basic.lean\#L36}
\begin{lstlisting}
structure loop (X : Type*) :=
(to_fun : ℝ → X)
(per' : ∀ t, to_fun (t + 1) = to_fun t)
\end{lstlisting}
which packages a function \lstinline{to_fun} with a periodicity assumption
(with period $1$ so that our implementation is closest to $\RR/\ZZ$). We then record a coercion
from \lstinline{loop X} to \lstinline{ℝ → X} so that, given \lstinline{t : ℝ}
and \lstinline{γ : loop X} one can write \lstinline{γ t}.
%which is translated to
%\lstinline{γ.to_fun t}. More precisely, the elaborator sees that \lstinline{γ} is used
%as a function despite not having a function type so it replaces is with
%\lstinline{⇑γ t} where the arrow stands for \lstinline{coe_fun} which has an
%instance argument which the type class search mechanism hooks to the coercion
%instance mentioned above.
Note in particular that there is no type $\S$ in this
story, only a type which plays the role of functions on $\S$ but is not a
function type from a foundational point of view. One last note about \lstinline{loop}:
the reason why the \lstinline{per'} field of \lstinline{loop} is named with
a prime is that it is later restated as \lstinline{loop.per} in terms of this
coercion to function.\selink{loops/basic.lean\#L79}

With this definition of loops, it is very easy to state that a loop is
continuous or smooth. But this is not enough. We also need families of loops,
parametrised by topological spaces. In particular we also need loops parametrised by a
normed space $E$, or by $E \times \RR$ as in the statement of \Cref{prop:loops}.
This creates some tension since we would like to think of such a family of loops
as a function on $E \times \R \times \S$ but our loops are not true function types
and so we must do some extra work in order to obtain the usual conveniences of
partial evaluation and currying when working with families of loops.
We thus introduced a type class:\mllink{logic/function/basic.lean\#L643}
\begin{lstlisting}
class has_uncurry (α β γ : Type*) :=
(uncurry : α → β → γ)
\end{lstlisting}
which records a way to turn an element of \lstinline{α} into a function from
\lstinline{β} to \lstinline{γ}, with notation \lstinline{↿} for \lstinline{has_uncurry.uncurry}.
The most generic use is to uncurry recursively. For instance a function
\lstinline{f : α → β → γ → δ} will be fully uncurried to \lstinline{↿f : α × β × γ → δ}.
Using this typeclass, for every pair of types \lstinline{α} and \lstinline{X}, we then register
the key instance \lstinline{has_uncurry (α → loop X) (α × ℝ) X}
which allows us to convert any function \lstinline{φ : α → loop X} to a function \lstinline{α × ℝ → X}.
This way we can state the smoothness conclusion from \Cref{prop:loops} simply as
\lstinline{𝒞 ∞ ↿γ}.

This setup is not completely bullet-proof: sometimes the elaborator gets confused
and needs some help, despite the fact that, contrary to the slightly simplified code
displayed above, the actual code declares \lstinline{β} and \lstinline{γ} as output
parameters for type class instance search. However we are globally satisfied by this
encoding.

\subsection{Surrounding families}
\label{sub:surrounding_families}

We now discuss the proof of a version of \Cref{prop:loops},
subject to two simplifications.
Firstly, we work with \emph{continuous} families of loops.
We will smooth these families
at the end, taking advantage of the fact that $\Omega$ and the surrounding
condition are open.
Secondly, we work with families of loops that don't have a prescribed average, but
which we can reparametrise to have the prescribed average.
We will do the reparametrisation in \Cref{sub:reparametrisation}.
To ensure that this reparametrisation exists,
we need to require that the loop $\gamma_x$ \emph{surrounds} $g(x)$.

\begin{definition}
A loop $\gamma$ \emph{surrounds} a point $v$ if there is an affine basis in the image of $g$
such that $v$ has positive barycentric coordinates with respect to this basis.\selink{loops/surrounding.lean\#L344}
\end{definition}
From the discussion in \Cref{sub:barycentric} it seems that
we could have given the definition equivalently as $v \in \CH(\im\gamma)^o$.
This is indeed the definition used in the standard references \cite{Gromov_PDR, Spring}.
However, this is not clearly equivalent.
Notice that $v \in \CH(A)^o$ does not always imply that
there is an affine basis in $A$ such that $v$ has positive barycentric coordinates
with respect to this basis.
As a counterexample, consider $A$ to be the vertices of a square in the plane,
and let $v$ be the center of the square.
We define a loop surrounding a point as above, because this is exactly the condition we need.

The first main task in proving the special case of \Cref{prop:loops} is to construct
suitable families of loops $\gamma_x$ surrounding $g(x)$, by assembling local
families of loops.
We therefore introduce the following definition.\selink{loops/surrounding.lean\#L523}

\begin{definition}
  \label{def:family_surrounds}
  A continuous family of loops
  $\gamma : E \times [0, 1] \times \SS^1 \to F, (x, t, s) \mapsto \gamma^t_x(s)$
  surrounds a map $g : E \to F$ with base $\beta : E \to F$ on $U \subseteq E$ in
  $\Omega \subseteq E \times F$ if, for every $x$ in $U$, every $t \in [0, 1]$ and every
  $s \in \SS^1$,
  \begin{itemize}
    \item
      $\gamma^t_x(s) = \beta(x)$ if $t = 0$ or $s = 0$,
    \item
      $\gamma^1_x$ surrounds $g(x)$,
    \item
      $(x,\gamma^t_x(s)) \in \Omega$.
  \end{itemize}
  The space of such families will be denoted by
  $\Loop(g, \beta, U, \Omega)$.
\end{definition}

In this section we assume the hypotheses of \Cref{prop:loops},
i.e. $\beta$ and $g$ are smooth maps, $\beta(x)\in\Omega_x$ for all $x$,
and $g(x) \in \CH(\conn{\Omega_x}{\beta(x)})$.

Using Carathéodory's lemma, we can construct a surrounding loop at a single point $x$ and thus
obtain an element of $\Loop(g, \beta, \{x\}, \Omega)$.\selink{loops/surrounding.lean\#L491}
Since $g(x) \in \CH(\conn{\Omega_x}{\beta(x)})$ and $\conn{\Omega_x}{\beta(x)}$ is open,
we can obtain an affine basis $B\subseteq\conn{\Omega_x}{\beta(x)}$ such that
$g(x) \in \CH(B)^o$.\selink{loops/surrounding.lean\#L140}
Since $\conn{\Omega_x}{\beta(x)}$ is path-connected,
we can then find a path in $\conn{\Omega_x}{\beta(x)}$ starting at $\beta(x)$ through all points in $B$.
To make it a null-homotopic loop in $\Omega_x$ based at $\beta(x)$,
we traverse the same path backward.
This homotopy provides an element of $\Loop(g, \beta, \{x\}, \Omega)$.

Moreover, we can even construct families of surrounding loops locally around a point $x_0$.%
\selink{loops/surrounding.lean\#L751}
We take our element $\gamma\in\Loop(g, \beta, \{x_0\}, \Omega)$ and set
\[\gamma_x^t(s)=\gamma^t(s)+\beta(x)-\beta(x_0).\]
Since $\Omega$ is open and barycentric coordinates are smooth,
this will give a surrounding family in $\Loop(g, \beta, U, \Omega)$
for some neighbourhood $U$ of $x_0$.

The difficulty in constructing global families of surrounding loops is that
there are plenty of surrounding loops and we need to choose them
consistently.
The key feature of the above definition is that the parameter $t$ not only
allows us to carry out the corrugation
process in the next section, but also brings a ``satisfied or refund'' guarantee,
as explained in the next lemma.\selink{loops/surrounding.lean\#L964}

\begin{lemma}
  \label{lem:satisfied_or_refund}
  For any set $U \subseteq E$,  $\Loop(g, \beta, U, \Omega)$ is contractible:
  for every $\gamma_0$ and $\gamma_1$ in $\Loop(g, \beta, U, \Omega)$,
  there is a continuous map
  $\delta : [0, 1] \times E \times [0, 1] \times \SS^1 \to F,
  (\tau, x, t, s) \mapsto \delta^t_{\tau, x}(s)$
  which interpolates between $\gamma_0$ and $\gamma_1$ in
  $\Loop(g, \beta, U, \Omega)$.
\end{lemma}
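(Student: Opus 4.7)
My plan is to construct the interpolation $\delta$ explicitly by concatenating (reparametrised copies of) $\gamma_0$ and $\gamma_1$ along the loop parameter $s$, with the gluing point $s = 1-\tau$ sliding as $\tau$ goes from $0$ to $1$. The key observation enabling this is that every family in $\Loop(g, \beta, U, \Omega)$ shares a common collapsing behaviour: $\gamma_i^t_x(0) = \beta(x)$ for all $t$ and $\gamma_i^0_x(s) = \beta(x)$ for all $s$. This is exactly the ``satisfied or refund'' feature of the $t$-parameter, and it provides the common basepoint along which distinct families can be stitched together continuously.

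Concretely, for $\tau \in [0, 1]$ I would set
\[
  \delta^t_{\tau, x}(s) = \begin{cases}
    \gamma_0^{\,t\,\chi_0(\tau)}_x\!\left(\frac{s}{1-\tau}\right) & \text{if } s \in [0,\, 1-\tau], \\[4pt]
    \gamma_1^{\,t\,\chi_1(\tau)}_x\!\left(\frac{s-(1-\tau)}{\tau}\right) & \text{if } s \in [1-\tau,\, 1],
  \end{cases}
\]
where $\chi_0, \chi_1 \co [0,1] \to [0,1]$ are continuous ``tempering'' functions satisfying $\chi_0(\tau) = 1$ for $\tau \le 1/2$ and $\chi_0(1) = 0$, and symmetrically $\chi_1(\tau) = 1$ for $\tau \ge 1/2$ and $\chi_1(0) = 0$ (for concreteness, $\chi_0(\tau) = \min(1, 2-2\tau)$ and $\chi_1(\tau) = \min(1, 2\tau)$). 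The degenerate pieces (where a denominator vanishes) are extended by continuity to the constant value $\beta(x)$. The endpoints $\delta_0 = \gamma_0$ and $\delta_1 = \gamma_1$ are immediate. The three defining conditions of $\Loop(g, \beta, U, \Omega)$ are easy to verify: the basepoint constraints at $s = 0$ and $t = 0$ come from the loop property of $\gamma_0, \gamma_1$; every value of $\delta$ is of the form $\gamma_0^{t'}_x(s')$ or $\gamma_1^{t'}_x(s')$ and so lies in $\Omega_x$; and at $t = 1$, either $\chi_0(\tau) = 1$ (for $\tau \le 1/2$, so the first piece traces the whole of $\gamma_0^1_x(\SS^1)$, which already contains an affine basis surrounding $g(x)$) or $\chi_1(\tau) = 1$ (symmetrically for $\gamma_1^1_x$), so the full image of $\delta^1_{\tau, x}$ contains a surrounding basis.

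The main obstacle will be establishing joint continuity of $\delta$ in the four arguments $(\tau, x, t, s)$. Along the seam $s = 1-\tau$ the two branches agree, since the first reaches loop parameter $1$ (mapping to $\beta(x)$ by periodicity and basepoint) and the second starts at loop parameter $0$; a piecewise-continuity lemma in the spirit of \lstinline{continuous_if} from \Cref{sub:continuity_lemmas} then handles the join. The genuinely delicate points are the corners $(\tau, s) \in \{(0, 1),\, (1, 0)\}$, where the naive formula has non-removable discontinuities: near $(0, 1)$, for example, the second piece's argument $(s-(1-\tau))/\tau$ can accumulate to any value in $[0, 1]$ depending on the direction of approach. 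This is precisely what the tempering factors $\chi_i$ are designed to cure: since $\chi_1(\tau) \to 0$ as $\tau \to 0$, the second piece collapses to $\gamma_1^0_x(\,\cdot\,) = \beta(x)$ regardless of its loop argument, matching the limit from the first piece; the corner $(1, 0)$ is symmetric. Once continuity is in hand, the defining conditions of $\Loop(g, \beta, U, \Omega)$ only need to hold for $x \in U$, so the formula above extends harmlessly to all $x \in E$.
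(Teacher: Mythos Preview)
Your construction is essentially identical to the paper's: the paper uses a piecewise-affine $\rho$ with $\rho(\tau)=1$ for $\tau\le 1/2$ and $\rho(1)=0$, scales the $t$-parameter of $\gamma_0$ by $\rho(\tau)$ and of $\gamma_1$ by $\rho(1-\tau)$, and concatenates the reparametrised loops at $s=1-\tau$---exactly your $\chi_0,\chi_1$ and your two-piece formula. Your discussion of the surrounding property (one of $\chi_0,\chi_1$ equals $1$ so a full copy of some $\gamma_i^1$ is present) and of the delicate continuity at the corners, cured by the tempering factors collapsing the fast piece to $\beta(x)$, matches the paper's argument as well.
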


The tricky part of this lemma is that we need to make sure that $\delta$ always
surrounds $g$. The informal proof is again a nice picture: the idea is to
build a path of loops that starts with $\gamma_0$ then $\gamma_0$
concatenated with a longer and longer initial segment of $\gamma_1$ until one reaches
the full concatenation of $\gamma_0$ and $\gamma_1$ at $\tau = 1/2$, and start
replacing $\gamma_0$ by a shorter and shorter initial segment. For each $\tau$
this contains a full copy of either $\gamma_0$ or $\gamma_1$ hence surrounds $g$.
We now describe how we implemented this picture.

Let $\rho:\RR\to\RR$ be a piecewise-affine function with
$\rho(t)=1$ for $t\le\frac12$ and $\rho(t)=0$ for $t\ge1$.
We can define the homotopy $\delta$ as follows:
\begin{itemize}
  \item $\delta_{\tau,x}^t$ moves along the loop
    $\gamma_{0,x}^{\rho(\tau)t}$ on $[0,1-\tau]$ (if $\tau<1$)
  \item $\delta_{\tau,x}^t$ moves along the loop
    $\gamma_{1,x}^{\rho(1-\tau)t}$ on $[1-\tau,1]$ (if $\tau>0$)
\end{itemize}
Note that the image of $\delta_{\tau,x}^1$ contains
the image of $\gamma_{0,x}^1$ for $\tau\le\frac12$,
and the image of $\gamma_{1,x}^1$ for $\tau\ge\frac12$.
Hence it always surrounds $g(x)$.

The argument that $\delta$ is continuous is surprisingly tricky.
First of all, $\delta$ is defined piecewise,
so we have to check that the different cases agree on the frontier.
Furthermore, note that if $\tau\to1$ then $\delta_{\tau,x}^1$
will move along loop $\gamma_{0,x}^{\rho(\tau)t}$
at a speed that tends to $+\infty$,
so we need to show that $\gamma_{0,x'}^{\rho(\tau)t'}$
tends uniformly to the constant loop as $(x',\tau,t')\to(x,1,t)$,
which follows from the fact that $\gamma$ is continuous.

Using this lemma, we can transition between two solutions.
Therefore, if we have a solution $\gamma_i$ near $K_i$ for a compact sets $K_i$ ($i\in\{0,1\}$),
we can find a solution near $K_0\cup K_1$ that coincides with $\gamma_0$ near $K_0$.%
\selink{loops/surrounding.lean\#L995}

Finally, we can apply this recursively to obtain the following result.%
\selink{loops/surrounding.lean\#L1257}
\begin{lemma}
  \label{lem:exists_surrounding_loops}
  In the setup of \Cref{prop:loops}, assume we have a
  continuous family $\gamma$ of loops defined near $K$ which is based at $\beta$,
  surrounds $g$ and such that each $\gamma_x^t$ takes values in $\Omega_x$.
  Then there such a family which is defined on all of $E$ and agrees
  with $\gamma$ near $K$.
\end{lemma}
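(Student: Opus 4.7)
The plan is to iterate the two-set gluing principle recorded just before the lemma statement (itself a corollary of \Cref{lem:satisfied_or_refund}) along a $\sigma$-compact exhaustion of $E$. Since $E$ is a finite-dimensional real normed space, I would choose compact sets $L_0 \subseteq L_1 \subseteq \cdots$ with $L_n \subseteq \mathrm{int}(L_{n+1})$ and $\bigcup_n L_n = E$; the closed balls of integer radius are a concrete example. The local existence of surrounding families recalled earlier in \Cref{sub:surrounding_families} ensures that every point of $E$ has an open neighborhood $U$ on which $\Loop(g, \beta, U, \Omega)$ is non-empty.

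I would then construct inductively a sequence $\gamma_n \in \Loop(g, \beta, W_n, \Omega)$, with $W_n$ an open neighborhood of $K \cup L_n$, such that $\gamma_n$ coincides with $\gamma_{n-1}$ near $K \cup L_{n-1}$ (using the conventions $\gamma_{-1} := \gamma$ and $L_{-1} := \emptyset$, so that in particular $\gamma_0$ agrees with the given family near $K$). For the step $n-1 \to n$, the set $(K \cup L_n) \setminus W_{n-1}$ is compact and can therefore be covered by finitely many open sets $U_1, \ldots, U_k$ on each of which a local surrounding family is available. I would glue these into $\gamma_{n-1}$ one at a time using the two-set principle, arranging at each application to preserve the current family near a compact neighborhood of $K \cup L_{n-1}$ that is strictly contained in the current open domain. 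After $k$ gluings this yields $\gamma_n$.

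Once the sequence is built, I pass to the limit. Since $L_{n-1} \subseteq \mathrm{int}(L_n)$ and $\gamma_m = \gamma_n$ near $L_{n-1}$ for every $m \ge n$, the sequence is eventually constant on every compact subset of $E$. Setting $\gamma_\infty(x, t, s) := \gamma_n(x, t, s)$ for any sufficiently large $n$ with $x \in \mathrm{int}(L_n)$ produces a well-defined continuous family on $E \times [0, 1] \times \SS^1$, taking values in $\Omega$, surrounding $g$ at every point, and coinciding with $\gamma$ near $K$, as required.

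The principal obstacle is bookkeeping: at every stage of the induction one must juggle several nested open and compact sets so that the compact ``preserve-near'' inputs demanded by the two-set gluing remain strictly inside the current open domain. This forces repeated shrinkage of the $W_n$ and some care in taking compact neighborhoods of $K \cup L_{n-1}$, particularly near the frontier of $W_{n-1}$. Apart from this administrative care, both the individual gluings and the passage to the limit are routine.
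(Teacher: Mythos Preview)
Your proposal is correct and follows essentially the same strategy as the paper: the paper proves this lemma by choosing a countable locally-finite family of compact sets covering $E$ and recursively extending the given family using the two-set gluing principle derived from \Cref{lem:satisfied_or_refund}. Your $\sigma$-compact exhaustion is one concrete realisation of such a cover, and the bookkeeping you describe is precisely what the recursive extension entails.
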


The proof requires finding a countable locally-finite family of compact sets covering $E$
and extending the solution recursively.

\subsection{The reparametrisation lemma}
\label{sub:reparametrisation}
The reparametrisation lemma concerns the behaviour of the average value of a
smooth loop $\gamma : \S → F$ when the loop is reparametrised by precomposing
it with a diffeomorphism $\phi : \S → \S$.

Given a loop $\gamma : \S → F$, for some finite-dimensional real vector space
$F$, one may integrate to obtain its average $\avg{\gamma} = \int_0^1 \gamma$.
Although this average depends on the loop's
parametrisation\footnote{Intuitively, the parametrisation is the speed at which
$\gamma(s)$ moves when $s$ moves with unit speed in $\SS^1$.}, it satisfies a
constraint that depends only on the image of the loop: the average is contained
in the closure of the convex hull of the image of $\gamma$.
Indeed the integral defining the average value is a limit of average values over a
finite sample of values and those finite averages belong to $\CH(\im \gamma)$.

The reparametrisation lemma says that conversely, given any point $g$ \emph{surrounded}
by $\gamma$, there exists a reparametrisation $\phi$ such that $\gamma \circ \phi$ has
average value $g$.

The reparametrisation lemma thus allows one to reduce the problem of
constructing a loop whose average is a given point, to the problem of
constructing a loop subject to a condition that depends only on its image.

The idea of the proof is simple: since $g$ is contained in the
the convex hull of the image of $\gamma$, there exist $s_0, s_1, \ldots, s_d$
and barycentric coordinates $w_0, w_1, \ldots, w_d$ such that:
\begin{align*}
  g = \sum w_i \gamma (s_i).
\end{align*}
If there were no smoothness requirement on $\phi$ one could define it to be a
step function which spends time $w_i$ at each $s_i$. However because there is a
smoothness condition, one must approximate by rounding off the corners of the
would-be step function. Using such an
approximation it is easy to see that the average of $\gamma \circ \phi$ can be
made arbitrarily close to $g$. In order to find $\phi$ such that the average is
exactly $g$ we use the additional freedom that we may also vary the $w_i$.
Because being surrounded is an open condition, a simple continuity
argument shows that this additional freedom is sufficient.

Because the $s_i$ are constant, it is easy to construct the inverse of $\phi$,
which is what we did. It is constructed as the integral of a sum of
approximations to the Dirac delta functions, which we call delta
mollifiers.\selink{loops/delta_mollifier.lean\#L270}
% \begin{figure}
%   \begin{tikzpicture}
%     % The axes
%     \draw[->] (-0.5,0)--(7,0) node[right] {$s$};
%     \draw[->] (0,-0.5)--(0,3.5) node[above] {$\phi(s)$};
%     % The x-axis ticks
%     \draw (1,0)--(1,-0.1); \node[below] at (1,-0.1) {$w_0$};
%     \draw (3,0)--(3,-0.1); \node[below] at (3,-0.1) {$w_0 + w_1$};
%     \draw (6,0)--(6,-0.1); \node[below] at (6,-0.1) {$w_0 + w_1 + w_2$};
%     % The y-axis ticks
%     \draw (0,0.5)--(-0.1,0.5); \node[left] at (-0.1,0.5) {$s_0$};
%     \draw (0,2)--(-0.1,2); \node[left] at (-0.1,2) {$s_1$};
%     \draw (0,3)--(-0.1,3); \node[left] at (-0.1,3) {$s_2$};
%     % The step function
%     \draw (0,0.5)--(1,0.5);
%     \draw (1,2)--(3,2);
%     \draw (3,3)--(6,3);
%     % The smooth approximation to the step function
%     \draw[dashed] plot[smooth,domain=0:0.9] (\x, {0.1*\x + 0.5/(1 + exp(-15*\x))^5});
%     \draw[dashed] plot[smooth,domain=0.9:2.9] (\x, {0.1*\x + 0.5 + 1.5/(1 + exp(-15*(\x - 1)))^5});
%     \draw[dashed] plot[smooth,domain=2.9:6] (\x, {0.1*\x + 2 + 0.7/(1 + exp(-15*(\x - 3)))^5});
%   \end{tikzpicture}
%   \caption{The dashed line represents a smooth approximation to the step
%   function shown. (This figure implicitly identifies the interval $[0, 1)$ with
%   the circle $\SS^1$.)}
%   \label{reparmStepFuncApprox}
% \end{figure}

In fact the reparametrisation lemma holds for \emph{families} of loops and this
was the version that we needed. More precisely we proved the following:
\begin{lemma}
  Let $E$, $F$ be a finite-dimensional normed real vector
  spaces, $\gamma$ a smooth family of loops:
  \begin{align*}
    \gamma : E \times \S &\to F,\\
    (x, s) &↦ \gamma_x (s),
  \end{align*}
  and $g : E → F$ a smooth function such that $\gamma_x$ surrounds $g(x)$ for
  all $x$.
  Then there exists a smooth family of diffeomorphisms $\phi_x$ of $\SS^1$
  such that $\avg{\gamma_x ∘ \phi_x} = g(x)$ and $\phi_x(0) = 0$ for all $x$.
\end{lemma}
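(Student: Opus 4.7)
The plan is to proceed in three stages: produce the data locally, turn it into a local diffeomorphism, and then patch with a partition of unity. First, fix $x_0 \in E$. Using the surrounding hypothesis and Carathéodory, extract points $s_0, \ldots, s_d \in \S$ such that $\gamma_{x_0}(s_0), \ldots, \gamma_{x_0}(s_d)$ form an affine basis of the affine hull of the image of $\gamma_{x_0}$, with $g(x_0)$ lying in the interior of their convex hull. By smoothness of barycentric coordinates (\Cref{sub:barycentric}) and openness of the surrounding condition, on some open neighbourhood $U$ of $x_0$ there exist smooth strictly positive weights $w_0(x), \ldots, w_d(x)$ summing to one with $\sum_i w_i(x)\, \gamma_x(s_i) = g(x)$.

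Second, convert these data into local reparametrisations. Following the informal description in the excerpt, build a density $\rho_x(s) := \sum_i w_i(x)\, \eta_N(s - s_i)$, a sum of smooth delta mollifiers of unit mass concentrated near the $s_i$, and set $\psi_x(s) := \int_0^s \rho_x(u)\, du$. Since $\rho_x$ is smooth, strictly positive, and integrates to one, $\psi_x$ descends to a smooth diffeomorphism of $\S$ fixing $0$, and we let $\phi_x := \psi_x^{-1}$. A change of variables gives
\[
  \avg{\gamma_x \circ \phi_x} = \int_0^1 \gamma_x(s)\, \rho_x(s)\, ds,
\]
which is only \emph{approximately} $\sum_i w_i(x)\gamma_x(s_i) = g(x)$. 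To obtain exact equality, invoke the implicit function theorem to perturb the weights $w_i(x)$ smoothly; openness of the surrounding condition ensures that the corrected weights remain positive.

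Third, globalise using partitions of unity. Choose a locally finite open cover $\{U_\alpha\}$ of $E$ on which local densities $\rho_x^\alpha$ as above exist, together with a subordinate smooth partition of unity $\{\lambda_\alpha\}$. Define
\[
  \rho_x(s) := \sum_\alpha \lambda_\alpha(x)\, \rho_x^\alpha(s), \qquad \psi_x(s) := \int_0^s \rho_x(u)\, du,
\]
and set $\phi_x := \psi_x^{-1}$. Each $\rho_x^\alpha$ is a smooth positive density of unit mass and $\sum_\alpha \lambda_\alpha \equiv 1$, so $\rho_x$ itself is a smooth positive density of unit mass. Hence $\phi_x$ is a smooth family of diffeomorphisms of $\S$ fixing $0$, and linearity of the integral yields
\[
  \avg{\gamma_x \circ \phi_x} = \int_0^1 \gamma_x(s)\rho_x(s)\, ds = \sum_\alpha \lambda_\alpha(x) \int_0^1 \gamma_x(s)\rho_x^\alpha(s)\, ds = g(x).
\]

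The main obstacle I expect is the implicit function step in the second stage: one must verify that the map from weight vectors to the resulting average is a smooth submersion near the approximate solution, and track smooth dependence of the corrected weights on $x$. The whole argument also hinges on a subtle structural observation, namely that the average of $\gamma_x \circ \phi_x$ is \emph{linear} in the density $\psi_x' = \rho_x$. This is what permits a genuine convex combination at the level of densities even though diffeomorphisms themselves do not form a convex set, and it is the linchpin of the partition-of-unity step.
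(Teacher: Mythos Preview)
Your proposal is correct and follows essentially the paper's route: extract the surrounding data locally, build the inverse diffeomorphism as the primitive of a weighted sum of delta mollifiers, correct the weights to hit the target average exactly, and then globalise by forming a convex combination of the local \emph{densities} under a partition of unity---precisely the linearity observation you single out. One remark on the obstacle you flag: since $\rho_x$ depends \emph{linearly} on the weight vector $w$, the map $w \mapsto \int_0^1 \gamma_x(s)\rho_x(s)\,ds$ restricted to $\{\sum w_i = 1\}$ is affine and, for large $N$, uniformly close to the affine isomorphism $w \mapsto \sum_i w_i\,\gamma_x(s_i)$, so the correction is just inverting a nearby affine map (smoothly in $x$) rather than a genuine implicit-function-theorem step---this is the paper's ``simple continuity argument'' and dissolves the difficulty you anticipated.
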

The argument outlined above for a single loop works locally in the
neighbourhood of any $x$ in $E$ and one uses a partition of unity to globalise
all the local solutions into the required family.

Actually in our formalisation, the statement of this lemma is distributed across a
definition,\selink{loops/reparametrization.lean\#L524} a lemma about its
smoothness,\selink{loops/reparametrization.lean\#L553} and a lemma about its
average values.\selink{loops/reparametrization.lean\#L564}

\subsection{Proof of the loop construction proposition}
\label{sub:proof_loop_prop}
Using these ingredients, we can now prove \Cref{prop:loops}.
\begin{proof}[Proof sketch of \Cref{prop:loops}]
Let $\gamma^*$ be any family of loops in $\Loop(0,0,\{0\},F)$.
In a neighborhood $U^*$ of $K$ where $g=\beta$, we can set $\gamma'_x=g(x)+\epsilon\gamma^*$.
Here $\epsilon>0$ must be small enough to ensure that $\gamma'_x$ lands in $\Omega_x$,
which is possible since $\Omega$ is open and $K$ is compact.
From \Cref{lem:exists_surrounding_loops} we obtain
a continuous family of surrounding loops $\gamma'_x$ for all $x$.
We can now approximate $\gamma'$ with a smooth family of loops $\gamma^S$.
Next, we reparametrise $\gamma^S_x$ as discussed in \Cref{sub:reparametrisation} to obtain a
smooth family of loops $\gamma^R_x$ with average $g(x)$.
Finally, we use a smooth cut-off function to transition between $g(x)$ near $K$
to $\gamma^R_x$ outside $U^*$ to obtain our final family $\gamma$ that equals $g$ near $K$.
\end{proof}
There are a couple of nuances to this argument.
First, we have to ensure that all our constructions remain in $\Omega$.
To do this, we must strengthen the condition on $U^*$.
We can require that there is a $\delta>0$ such that
for all $x\in U^*$ the ball with center $(x,\beta(x))$ and radius $2\delta$ lies on $\Omega$
and that the distance between $\gamma'_x$ and $\beta(x)$ is at most $\delta$.
When smoothing, we then require that $\gamma^S$ lies at most $\delta$ from $\gamma'$.
This ensures that $\gamma$ (which lies on the segment from $g$ to $\gamma^R$) lies in $\Omega$.

A second nuance is that we need to make sure that the smoothed family $\gamma^S$
still surrounds $g(x)$.
We ensure this by requiring that $\gamma^S$ is close enough to $\gamma'$ and invoking
a lemma that states that all loops close enough to a given loop
still surround a given point.\selink{loops/surrounding.lean\#L610}

A third nuance is the question of how we obtain a smooth function near a continuous one.
Our first plan was to use a convolution with a smooth bump function.
We need to require that $(\gamma^S)_x^t(s)=\beta(x)$ on $C = \{(t, s) \mid t = 0 \vee s = 0\}$.
We planned to continuously reparametrise $\gamma'_x$
so that it becomes constant near $C$ and then
use the fact that the convolution of a function that is constant near $x_0$
with a bump function with small enough support doesn't change the value at $x_0$.
However, the problem is that we need to smooth $\gamma'$ in all arguments $(x, t, s)$, and
$(\gamma')_x^t(s)$ varies as $x$ varies, even near $C$,
since in that region it equals $\beta(x)$.

We did not find a way to solve this problem with convolutions,
since convolutions do not give you enough control over the resulting function in this case.
Instead, we used an argument based on partitions of unity.
We use the same argument to ensure that $(\gamma')_x^t(s)=\beta(x)$ near $C$ (which is smooth!)
and then we apply \Cref{cor:exist_smooth_approximation}.

% nuances probably not worth mentioning: $\gamma_2$ must be a loop (i.e. periodic) and constant when t is outside the unit interval

After taking these nuances into account,
we finally obtain a proof of \Cref{prop:loops}.\selink{loops/exists.lean\#L253}

\section{Convex integration}%
\label{sec:convex_integration}

\subsection{A theorem giving parametricity for free}%
\label{sub:a_theorem_giving_parametricity_for_free}

In this section we explain how to reduce \Cref{thm:gromov} to the case
where the parameter space $P$ is trivial.

Denote by $\Psi$ the map from $J^1(E \times P, F)$ to $J^1(E, F)$ sending $(x,
p, y, \psi)$ to $(x, y, \psi \circ \iota_{x, P})$ where
$\iota_{x, P} : E \to E \times P$ sends $v$ to $(v, 0)$.

To any family of sections $F_p : x \mapsto (f_p(x), \varphi_{p, x})$ of $J^1(E, F)$, we
associate the section $\bar F$ of $J^1(E \times P, F)$ sending $(x, p)$ to
$\bar F(x, p) := (f_p(x), \varphi_{p, x} \oplus \partial f/\partial p(x, p))$.%
\selink{local/parametric_h_principle.lean\#L128}

\begin{lemma}
  \label{lem:parametricity}
  In the above setup, we have:
  \begin{itemize}
    \item
      $\bar F$ is holonomic at $(x, p)$ if and only if $F_p$ is holonomic
      at $x$.\selink{local/parametric_h_principle.lean\#L169}
    \item
      $F$ is a family of formal solutions of some $\Rel \subset J^1(E, F)$ if and
      only if $\bar F$ is a formal solution of $\Rel^P := \Psi^{-1}(\Rel)$.%
      \selink{local/parametric_h_principle.lean\#L154}
    \item
        If $\Rel$ is ample then, for any parameter space $P$, $\Rel^P$ is also ample.%
        \selink{local/parametric_h_principle.lean\#L110}
  \end{itemize}
\end{lemma}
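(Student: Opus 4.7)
My plan is to handle the three parts mostly by unfolding definitions, with only the third bullet requiring real work. For the first bullet, I would observe that the total derivative of the function $\bar f(x, p) := f_p(x)$ at $(x, p)$ is the direct sum $D f_p(x) \oplus \partial f/\partial p(x, p)$. Hence holonomy of $\bar F$ at $(x, p)$ — the equality $\varphi_{p, x} \oplus \partial f/\partial p(x, p) = D \bar f(x, p)$ — reduces by cancelling the $P$-component to $\varphi_{p, x} = D f_p(x)$, which is holonomy of $F_p$ at $x$. For the second bullet, I would compute $(\varphi_{p, x} \oplus \partial f/\partial p(x, p)) \circ \iota_{x, P}$: since $\iota_{x, P}(v) = (v, 0)$, the $\partial f/\partial p$ summand drops out and the composition equals $\varphi_{p, x}$. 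Therefore $\Psi(\bar F(x, p)) = (x, f_p(x), \varphi_{p, x})$, so $\bar F(x, p) \in \Psi^{-1}(\Rel) = \Rel^P$ for all $(x, p)$ is equivalent to $(x, f_p(x), \varphi_{p, x}) \in \Rel$ for all $(p, x)$, i.e.\ to $F$ being a family of formal solutions of $\Rel$.

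The main work is in the third bullet. Fix $(x, p, y, \psi) \in J^1(E \times P, F)$ and a hyperplane $\bar H \subseteq E \times P$; I want the corresponding slice of $\Rel^P$ to be ample. I would split on whether $\bar H$ contains $\iota_{x, P}(E) = E \times \{0\}$. If it does, then every $\bar\psi$ agreeing with $\psi$ on $\bar H$ satisfies $\bar\psi \circ \iota_{x, P} = \psi \circ \iota_{x, P}$, so the slice of $\Rel^P$ equals either the empty set or the whole ambient affine space according as $(x, y, \psi \circ \iota_{x, P})$ lies outside or inside $\Rel$; both are ample.

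In the generic case $\bar H \not\supseteq E \times \{0\}$, the preimage $H := \iota_{x, P}^{-1}(\bar H)$ is a hyperplane in $E$. I would verify that the restriction map $R : \bar\psi \mapsto \bar\psi \circ \iota_{x, P}$ is an affine isomorphism from the affine space of linear maps $E \times P \to F$ agreeing with $\psi$ on $\bar H$ onto the affine space of linear maps $E \to F$ agreeing with $\psi \circ \iota_{x, P}$ on $H$. The direction space on each side is a copy of $F$, and choosing any $v \in E$ with $(v, 0) \notin \bar H$ shows $R$ is injective on directions, hence bijective by dimension. Under $R$ the $\Rel^P$-slice is pulled back from the $\Rel$-slice at $(x, y, \psi \circ \iota_{x, P})$ for the hyperplane $H$, which is ample by hypothesis; since affine isomorphisms preserve connected components and convex hulls, ampleness transfers. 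The main formalisation nuisance is bookkeeping these identifications of affine spaces and their directions cleanly.
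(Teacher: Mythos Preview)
Your argument is correct. The paper itself does not include a written proof of this lemma (it only links to the Lean formalisation and remarks that the third item appears to be new), so there is no textual proof to compare against; your unfolding of definitions for the first two bullets and your case split on whether $\bar H \supseteq E\times\{0\}$ for the third are exactly the natural route. One cosmetic difference: the paper and its formalisation work with slices indexed by dual pairs $(\pi,v)$ rather than by bare hyperplanes, so in practice the dichotomy becomes ``$\bar\pi\circ\iota_{x,P}=0$ versus $\bar\pi\circ\iota_{x,P}\neq 0$'' and the affine isomorphism $R$ is replaced by an identification of both slices with subsets of $F$; the content is identical to what you wrote.
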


As far as we know, the last item is new. Indeed in an informal account it does not
cost much to write that handling parameters only requires complicating notation
or proving variations of known lemmas, so the incentive to prove the above
lemma is low.
Using it, we obtain the parametricity in \Cref{thm:gromov} for free.%
\selink{local/parametric_h_principle.lean\#L296} % note: this doesn't exactly match the formal statement - for the formal statement we need to put this subsection last in section 4
\begin{lemma}
  \label{lem:parametricity_for_free}
  If $\Rel^P$ satisfies the $h$-principle
  (i.e. the conclusion of \Cref{thm:gromov} for all appropriate $C$, $K$ and $\F_0$)
  with a trivial parameter space $P$,
  then $\Rel$ satisfies the $h$-principle with parameter space $P$.
\end{lemma}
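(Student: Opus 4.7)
The plan is to reduce the parametric $h$-principle for $\Rel$ to the hypothesized non-parametric $h$-principle for $\Rel^P$, using the $F \mapsto \bar F$ construction of \Cref{lem:parametricity} to absorb the parameter $p$ into the base space. First observe that $\Rel^P = \Psi^{-1}(\Rel)$ is open (since $\Psi$ is continuous and $\Rel$ is open) and ample by the third item of \Cref{lem:parametricity}, so the hypothesis indeed applies to $\Rel^P$.

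Given data $(\F_0, C, K, \e)$ for the parametric conclusion, form $\overline{\F_0} \co E \times P \to F \times \L{E \times P}{F}$ as in \Cref{lem:parametricity}, and view $C$, $K$ as closed and compact subsets of $E \times P$ via the obvious coordinate swap. By item 2 of that lemma, $\overline{\F_0}$ is a formal solution of $\Rel^P$; by item 1 it is holonomic at $(x, p)$ for $(x, p)$ near $C$, since $(\F_0)_p$ is holonomic at $x$ there. Applying the hypothesized non-parametric $h$-principle for $\Rel^P$ to this data, with the same $\e$, produces a section $\overline{\F} \co \RR \times (E \times P) \to F \times \L{E \times P}{F}$ satisfying the four conclusions of \Cref{thm:gromov} with $\Rel$ replaced by $\Rel^P$ and trivial parameter space.

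Finally, push back down through $\Psi$ fibrewise: define
\[
  \F((t, p), x) := (y, \psi \circ \iota_{x, P}) \quad\text{where}\quad (y, \psi) := \overline{\F}(t, (x, p)).
\]
By construction, this is a family of formal solutions of $\Rel$ parametrised by $\RR \times P$, precisely because $\Rel^P = \Psi^{-1}(\Rel)$. Checking the four required properties is then essentially mechanical: the identity $\F((0, p), x) = \F_0(p, x)$ reduces to the pointwise computation $(\varphi_{p, x} \oplus \partial_p f_p(x)) \circ \iota_{x, P} = \varphi_{p, x}$, which is immediate from $\iota_{x, P}(v) = (v, 0)$; the constancy near $C$ and the $\e$-estimate in $F$ transfer directly because $\F$ and $\overline{\F}$ share the same $F$-component; and holonomicity of $\F_{(1, p)}$ at $x$ near $K$ follows from holonomicity of $\overline{\F}(1, \cdot)$ at $(x, p)$ by the first item of \Cref{lem:parametricity}. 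The only real bookkeeping is the switch between the $P \times E$ and $E \times P$ conventions together with the corresponding translation of the ``near $C$'' and ``near $K$'' quantifiers; all the mathematical content is packaged into \Cref{lem:parametricity}, and the present lemma simply ``turns the crank''.
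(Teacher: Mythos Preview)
Your proof is correct and is exactly the unpacking of \Cref{lem:parametricity} that the paper intends; the paper itself gives no proof beyond pointing to that lemma and the formalisation. The only slightly loose step is your final appeal to item~1 of \Cref{lem:parametricity} for holonomicity: that item is stated for sections of the special form $\bar G$, whereas $\overline{\F}(1,\cdot)$ is a priori an arbitrary holonomic section of $J^1(E\times P,F)$, so the cleaner justification is the direct chain-rule identity $Dg(x,p)\circ\iota_{x,P}=D\big(g(\cdot,p)\big)(x)$, which immediately gives holonomicity of the pushed-down section.
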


\subsection{Corrugations}%
\label{sub:corrugations}

In this section we comment on our formalisation of Theillière's corrugation
operation introduced in \cite{Theilliere22}. In fact for our purposes we need to
generalise the results of \cite{Theilliere22} slightly.

Fix a dual pair $p = (\pi, v)$ on $E$. Recall that this means
$\pi \in E^*$, $v \in E$, and $\pi(v) = 1$. Given a family of loops $\gamma_x$ in $F$ parametrised
by $x$ in $E$, and a real number $N$ the corrugation
map $\Corr{p}{N}{\gamma} \co E \to F$ is defined by:\selink{local/corrugation.lean\#L56}
\[
\Corr{p}{N}{\gamma}(x) = \frac1N \int_0^{N\pi(x)} \left(\gamma_x(s) - \avg{\gamma}_x\right)\, ds.
\]
We also define the remainder term:
$\Rem{p}{N}{\gamma} := \Corr{p}{N}{\partial_x\gamma}$ where
$\partial_x\gamma \co E \times \S \to \L{E}{F}$ is the partial derivative of $\gamma$
in the direction of $E$.

\begin{proposition}\label{prop:corrugation}
  Let $\gamma \co [0, 1] \times E \times \S \to F$ be a smooth family of loops in $F$
  parametrised by $[0, 1] \times E$.
  Let $K$ be a compact set in $E$ and let $\e$ be a positive real number. Then:
  \begin{itemize}
    \item
      $(t, x) \mapsto \Corr{p}{N}{\gamma_t}(x)$ is smooth,\selink{local/corrugation.lean\#L138}
    \item for every large $N$ and every $x$ in $K$,
      $\|\Corr{p}{N}{\gamma_t}(x)\| \le \e$,\selink{local/corrugation.lean\#L105}
    \item for every large $N$ and every $x$ in $K$,
      $\|\Rem{p}{N}{\gamma_t}(x)\| \le \e$,\selink{local/corrugation.lean\#L237}
    \item
      for every $t$,
      \[
        D\Corr{p}{N}{\gamma_t}(x) = \pi \otimes \left(\gamma_{t, x}(N\pi(x)) - \avg{\gamma}_{t, x}\right) +
      \Rem{p}{N}{\gamma_t}(x).\selink{local/corrugation.lean\#L186}
      \]
  \end{itemize}
\end{proposition}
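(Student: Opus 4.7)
The plan is to factor the corrugation through an auxiliary antiderivative and then extract all four conclusions from properties of that single object. Concretely, define
\[
  G_t(x, u) := \int_0^u \bigl(\gamma_{t,x}(s) - \avg{\gamma}_{t,x}\bigr)\, ds,
\]
so that $\Corr{p}{N}{\gamma_t}(x) = \tfrac1N G_t(x, N\pi(x))$. The key observation is that $u \mapsto G_t(x, u)$ is $1$-periodic: the integral of the zero-mean periodic function $\gamma_{t,x}(\cdot) - \avg{\gamma}_{t,x}$ over any unit interval vanishes. I would then deduce all four items from joint smoothness of $G$ and this periodicity.

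For the first three items, smoothness of $(t, x, u) \mapsto G_t(x, u)$ follows from the joint smoothness of $\gamma$ together with differentiation under the integral sign (applied both to the integrand and to the $x$-dependent average $\avg{\gamma}_{t,x} = \int_0^1 \gamma_{t,x}(s)\, ds$). Composition with the smooth map $(t, x) \mapsto (t, x, N\pi(x))$ and scalar multiplication by $1/N$ yields smoothness of the corrugation. For the bound, periodicity in $u$ combined with joint continuity gives a uniform bound $M_{\gamma} := \sup_{(t, x, u) \in [0,1] \times K \times [0,1]} \|G_t(x, u)\|$, and then $\|\Corr{p}{N}{\gamma_t}(x)\| \le M_{\gamma}/N \le \e$ as soon as $N \ge M_{\gamma}/\e$. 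The remainder bound is the same statement applied to the smooth family of loops $\partial_x \gamma \co [0,1] \times E \times \S \to \L{E}{F}$, which is periodic and zero-mean in $s$ for the same reasons.

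For the derivative formula, the chain rule applied to $\Corr{p}{N}{\gamma_t}(x) = \tfrac1N G_t(x, N\pi(x))$ yields
\[
  D\Corr{p}{N}{\gamma_t}(x)(h) = \tfrac1N\, (\partial_x G_t)(x, N\pi(x))(h) + (\partial_u G_t)(x, N\pi(x))\, \pi(h).
\]
By the fundamental theorem of calculus, $\partial_u G_t(x, u) = \gamma_{t,x}(u) - \avg{\gamma}_{t,x}$, which produces the tensor term $\pi \otimes (\gamma_{t,x}(N\pi(x)) - \avg{\gamma}_{t,x})$. For the first summand, differentiating under the integral in $x$ (and using that $D\avg{\gamma}_{t,\cdot}(x) = \overline{\partial_x \gamma}_{t,x}$ by the same Leibniz rule applied to the average) gives
\[
  (\partial_x G_t)(x, u) = \int_0^u \bigl(\partial_x \gamma_{t,x}(s) - \overline{\partial_x \gamma}_{t,x}\bigr)\, ds,
\]
so that $\tfrac1N (\partial_x G_t)(x, N\pi(x)) = \Corr{p}{N}{\partial_x \gamma_t}(x) = \Rem{p}{N}{\gamma_t}(x)$.

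The main obstacle I expect is bookkeeping the hypotheses for differentiation under the integral, particularly for the $x$-derivative of the average $\avg{\gamma}_{t,x}$, which must commute with the integral defining the mean in order to identify the $x$-derivative correctly and to get $\overline{\partial_x \gamma}$ rather than just $\partial_x \gamma$. In \mathlib this requires furnishing integrable dominating functions on the circle, which on the relevant compact set follows from continuity, but must be threaded carefully through each application of the Leibniz rule. Once this is done, the remaining algebra is a direct chain rule computation and the bounds fall out of compactness plus periodicity.
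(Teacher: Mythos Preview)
Your proposal is correct and follows essentially the same route as the paper. The paper organises the argument around the parametric-integral lemma (\Cref{lem:parametric_integral}), which packages exactly your chain-rule/FTC/Leibniz computation for $x \mapsto \int_a^{s(x)} \Phi(x,t)\,dt$; your auxiliary antiderivative $G_t(x,u)$ is just an explicit name for that object, and your observation that $u \mapsto G_t(x,u)$ is $1$-periodic (because the integrand has zero mean) is precisely the ``continuous loop is bounded uniformly over a compact parameter set'' ingredient the paper highlights for the $\|\cdot\| \le \e$ estimates.
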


The first point in the above proposition wouldn't be stated in an informal context, let alone
be proven. If pressed to provide a hint of proof, we would say this map is smooth
as a composition of smooth maps. We already discussed how to state and prove
smoothness lemma in \Cref{sub:continuity_lemmas}. The next ingredient of
course is a strong library of calculus and integration. At the beginning of this
project, \mathlib already contained such libraries,
including the fundamental theorem of calculus (see \cite{vandoorn2021haar} for
general explanation about integration and measure theory in \mathlib).
However it had nothing about parametric integrals. We thus needed to add the following lemma.%
\selink{to_mathlib/measure_theory/parametric_interval_integral.lean\#L539}%
\selink{to_mathlib/measure_theory/parametric_interval_integral.lean\#L404}
%, stated according to the prescription of \Cref{sub:continuity_lemmas}.

\begin{lemma}
  \label{lem:parametric_integral}
  Let $F$ be a real Banach space and $H$ be a finite-dimensional real normed space.
  Let $n$ be a natural number or $+\infty$, and let $a$ be a real number. Assume
  $s \co H \to \RR$ and $\Phi \co H \times \RR \to F$ are of class $\C^n$.
  Then the function defined by $x \mapsto \int_a^{s(x)} \Phi(x, t)\, dt$
  is of class $\C^n$ and, assuming $n > 0$, its derivative is
  \[
    x \mapsto \int_a^{s(x)} \frac{\partial \Phi}{\partial x}(x, t)\, dt + \Phi(x, s(x)) \otimes ds(x).
  \]
%  where the partial derivative appearing is indeed integrable on $[a, s(x)]$.
\end{lemma}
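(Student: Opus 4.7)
The natural strategy is to factor the map through the two-variable auxiliary function $G : H \times \RR \to F$ defined by $G(x, b) := \int_a^b \Phi(x, t)\, dt$, so that the map in the statement is $F(x) = G(x, s(x)) = G \circ (\mathrm{id}_H, s)$. Once we establish that $G$ is $\C^n$ with the expected partial derivatives, the formula and the regularity for $F$ follow by the chain rule and by composition of $\C^n$ maps.

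For the derivative formula (the case $n \geq 1$), I would first prove the two partial derivatives of $G$ separately. The derivative with respect to the upper limit is the fundamental theorem of calculus: $\partial_b G(x, b) = \Phi(x, b)$, which is available in \mathlib. The derivative with respect to the parameter is the Leibniz rule
\[
  \partial_x G(x, b) = \int_a^b \partial_x \Phi(x, t)\, dt,
\]
which requires an interchange of derivative and integral. The standard hypothesis is a local integrable dominating bound for $\partial_x \Phi$, which is automatic here since $\Phi$ is $\C^1$ and we integrate over a compact interval. Assuming both partial derivatives exist and are continuous, $G$ is of class $\C^1$, and then the chain rule gives
\[
  DF(x) = \partial_x G(x, s(x)) + \partial_b G(x, s(x)) \otimes ds(x)
        = \int_a^{s(x)} \partial_x \Phi(x, t)\, dt + \Phi(x, s(x)) \otimes ds(x).
\]

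For the $\C^n$ regularity, I would proceed by induction on $n$. The base case $n = 0$ is continuity of parametric integrals combined with continuity of $s$. For the inductive step, suppose the statement is proven for $n$ and assume $s$ and $\Phi$ are $\C^{n+1}$. Then $\partial_x \Phi$ and $\Phi$ are $\C^n$ and $ds$ is $\C^n$, so by the inductive hypothesis applied to $\partial_x \Phi$ (with the same $s$), the first term of the derivative formula is $\C^n$ in $x$; the second term is $\C^n$ as a product of compositions of $\C^n$ maps with a continuous bilinear evaluation. Thus $DF$ is $\C^n$, so $F$ is $\C^{n+1}$. The case $n = \infty$ then follows from the $\C^n$ case for every finite $n$.

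The main obstacle, as always with such lemmas, is not the mathematical content but the bookkeeping: the interchange of derivative and integral needs to be stated in a form that fits \mathlib's conventions (typically with locally-integrable derivative bounds), and the chain-rule computation must be spelled out in terms of the continuous linear evaluation maps so that the smoothness lemmas of \Cref{sub:continuity_lemmas} apply cleanly. Getting the parametric Leibniz rule in usable generality is the real work; once that is in place, the inductive bootstrapping and the chain-rule identification of $DF$ are mechanical.
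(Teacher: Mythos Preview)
Your proposal is correct and matches the paper's approach: the paper identifies exactly your two ingredients, namely the fundamental theorem of calculus for the dependence on the upper limit and dominated convergence to swap differentiation and integration for the parameter dependence, combined via the chain rule. The only organizational difference is that the paper first establishes a very general pointwise-differentiability version (\Cref{lem:parametric_integral_bis}) under minimal measurability/Lipschitz hypotheses and then specialises to the $\C^n$ statement, whereas you argue the $\C^n$ case directly by induction; the mathematical content is the same.
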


Note this lemma includes a version of the fundamental theorem of calculus
when $H = \R$, $\Phi$ does not depend on $x$, and $s = \id$. So this theorem is
obviously an ingredient in the above lemma. The other ingredient is the
dominated convergence theorem which allows one to swap the order integration and
differentiation when $s$ is constant. One might argue that invoking dominated convergence is excessive in our
situation where $F$ is finite dimensional, we assume continuous
differentiability everywhere, and the integration domain is compact. However we
proved these lemmas in the broader context of building \mathlib which strives
to be a general-purpose mathematics library. We thus first prove much
more general lemmas. The following is a sample statement.%
\selink{to_mathlib/measure_theory/parametric_interval_integral.lean\#L264}

\begin{lemma} % has_fderiv_at_parametric_primitive_of_lip'
  \label{lem:parametric_integral_bis}
  Let $F$ be a real Banach space and $H$ be a real normed space.
  Let $a_0$, $b_0$, $a$ and $\e$ be real numbers such that $a \in (a_0, b_0)$ and
  $\e > 0$. Let $x_0$ be a point in $H$.
  Let $s \co H \to \RR$, $\Phi \co H \times \RR \to F$, $\Phi' \co \RR \to \L{H}{F}$,
  and $b \co \RR \to \RR$ be functions.
  Suppose that the following properties hold:
  \begin{itemize}
    \item $s(x_0) \in (a_0, b_0)$ and $s$ is differentiable at $x_0$ with derivative
      $s' \in \L{H}{\R}$;
    \item $\Phi(x, \cdot)$ is almost everywhere strongly measurable on $(a_0, b_0)$
      for all $x \in B(x_0, \e)$;
    \item $\Phi(x_0, \cdot)$ is integrable on $(a_0, b_0)$ and continuous at $s(x_0)$;
    \item $\Phi'$ is almost everywhere strongly measurable on $(a, s(x_0))$;
    \item for almost every $t$ in $(a_0, b_0)$, $\Phi(\cdot, t)$ is
      $b(t)$-Lipschitz on the ball $B(x_0, \e)$;
    \item $b$ is integrable on $(a_0, b_0)$, non-negative and continuous at $x_0$;
    \item for almost every $t$ in $(a, s(x_0))$, $\Phi(\cdot, t)$ is differentiable at $x_0$
      with derivative $\Phi'(t)$.
  \end{itemize}
  %%% A more compact version if we need to revert:
  % Assume that $s(x_0) \in (a_0, b_0)$ and $s$ is differentiable at $x_0$ with derivative
  % $s' \in \L{H}{\R}$ and
  % that $\Phi(x, \cdot)$ is almost everywhere
  % strongly measurable on $(a_0, b_0)$ for all $x \in B(x_0, \e)$. Suppose that
  % $\Phi(x_0, \cdot)$ is integrable on $(a_0, b_0)$ and continuous at $s(x_0)$ and that
  % $\Phi'$ is almost everywhere strongly measurable on $(a, s(x_0))$.
  % Assume that, for almost every $t$ in $(a_0, b_0)$, $\Phi(\cdot, t)$ is
  % $b(t)$-Lipschitz on the ball $B(x_0, \e)$. Assume that $b$ is integrable on
  % $(a_0, b_0)$, non-negative and continuous at $x_0$. Assume that for almost every $t$
  % in $(a, s(x_0))$, $\Phi(\cdot, t)$ is differentiable at $x_0$ with derivative $\Phi'(t)$.
  Then $\Phi'$ is integrable on $(a, s(x_0))$ and
  the function defined by $x \mapsto \int_a^{s(x))} \Phi(x, t)\, dt$
  is differentiable at $x_0$ with derivative
  \[
    \int_a^{s(x_0)} \Phi'(t)\, dt + \Phi(x_0, s(x_0)) \otimes s'.
  \]
\end{lemma}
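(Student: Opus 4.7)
The plan is to split the target expression along the two sources of dependence on $x$ --- the integrand and the upper limit. Writing $F(x) := \int_a^{s(x)} \Phi(x, t)\,dt$, I would expand
\[
F(x) - F(x_0) = \int_a^{s(x_0)} \bigl(\Phi(x, t) - \Phi(x_0, t)\bigr)\,dt \; + \; \int_{s(x_0)}^{s(x)} \Phi(x, t)\,dt
\]
and analyse the ``body'' and ``boundary'' pieces separately.

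For the body term, I would apply differentiation under the integral sign at $x_0$. The Lipschitz hypothesis gives $\|\Phi(x, t) - \Phi(x_0, t)\| \le b(t)\|x - x_0\|$ for a.e.\ $t$ whenever $x \in B(x_0, \e)$, so the difference quotient is dominated by the integrable function $b$. Combined with the pointwise a.e.\ differentiability of $\Phi(\cdot, t)$ at $x_0$ with derivative $\Phi'(t)$, dominated convergence yields
\[
\int_a^{s(x_0)} \bigl(\Phi(x, t) - \Phi(x_0, t)\bigr)\,dt = \int_a^{s(x_0)} \Phi'(t)(x - x_0)\,dt + o(\|x - x_0\|).
\]
The same Lipschitz bound gives $\|\Phi'(t)\| \le b(t)$ wherever the derivative exists, which together with the assumed strong measurability yields integrability of $\Phi'$ on $(a, s(x_0))$.

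For the boundary term, I would rewrite it as $\int_{s(x_0)}^{s(x)} [\Phi(x, t) - \Phi(x_0, s(x_0))]\,dt + \Phi(x_0, s(x_0))\bigl(s(x) - s(x_0)\bigr)$, then bound the first summand via the triangle inequality by
\[
\|x - x_0\| \cdot \Bigl|\int_{s(x_0)}^{s(x)} b(t)\,dt\Bigr| + \Bigl|\int_{s(x_0)}^{s(x)} \|\Phi(x_0, t) - \Phi(x_0, s(x_0))\|\,dt\Bigr|.
\]
The first contribution is $o(\|x - x_0\|)$ using absolute continuity of the integral of $b$ together with $s(x) \to s(x_0)$; the second is $o(|s(x) - s(x_0)|)$ by continuity of $\Phi(x_0, \cdot)$ at $s(x_0)$, hence $o(\|x - x_0\|)$ by differentiability of $s$. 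Finally, $s(x) - s(x_0) = s'(x - x_0) + o(\|x - x_0\|)$ recovers the desired $\Phi(x_0, s(x_0)) \otimes s'$ term.

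The main obstacle will be managing the dominated convergence step for the body: one must pass from pointwise a.e.\ differentiability along arbitrary sequences $x_n \to x_0$ to a uniform (in $x$) integrable bound on difference quotients, and verify that the quotients are strongly measurable in $t$ before applying DCT. The Lemma helpfully retains the strong measurability of $\Phi'$ as a separate hypothesis rather than deriving it, sparing an argument via a countable dense subset of $H$. A secondary subtlety is that the regularity of $\Phi(x_0, \cdot)$ is assumed only pointwise at $s(x_0)$, so the boundary estimate must combine the $x$-Lipschitz and $t$-continuity controls rather than rely on either alone; this is also what forces the somewhat elaborate splitting above.
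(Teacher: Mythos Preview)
Your proposal is correct and follows exactly the approach the paper indicates: the paper does not spell out a proof of this lemma (it points to the Lean formalisation), but it names the two ingredients as the fundamental theorem of calculus for the varying endpoint and the dominated convergence theorem for differentiating under the integral sign when the endpoint is frozen. Your body/boundary decomposition is precisely how these two ingredients are combined, and your handling of the boundary remainder via the Lipschitz-in-$x$ bound together with continuity of $\Phi(x_0,\cdot)$ at $s(x_0)$ is the right way to cope with the fact that only pointwise continuity in $t$ is assumed.
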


We have not been able to find the above statement in any textbook. It has
rather minimalistic assumptions that are quite subtle. For instance the positive radius $\e$
is meant to ensure some uniformity in $t$ when requiring that $\Phi(\cdot, t)$ is
$b(t)$-Lipschitz near $x_0$. Also note that differentiability of $\Phi(\cdot, t)$ is assumed
only at $x_0$ and we do not require any bound on $\Phi'$, this is deduced from the local
Lipschitz assumption. We have many variations on this lemma including versions
where the measure used isn't the Lebesgue measure. They would be rather difficult to write
without a proof assistant (we certainly wrote several wrong variations in the process).
Those lemmas are not really meant to be used in such a generality, but they are meant
as common foundations for various lemmas with stronger assumptions.

A further remark about proving these kinds of lemmas is that informal accounts are
typically very sloppy about handling the fact that $s(x)$ could cross $a$ in the context
of \cref{lem:parametric_integral}. The formalised version requires care here.

One last ingredient in the proof of \Cref{prop:corrugation} is that any
continuous loop is bounded, and this holds uniformly with
respect to any parameter moving in a compact set. Our representation of
loops as periodic functions instead of as functions on the compact space $\S$ means
we must do some extra work to invoke this fact. We thus introduced $\RR/\ZZ$, denoted $\SS_1$ in the code,
and some glue to go back and forth between 1-periodic functions and functions on
$\SS_1$. However this glue is tightly encapsulated: we only use it to
prove that, given any \separated topological space $X$ and any compact set $K$
in $X$, every continuous function $f$ from $X \times \RR$ to a normed space
such that each $f(x, \cdot)$ is $1$-periodic is bounded on $K \times \RR$.
Yet again we benefitted from \mathlib's strong topology library, invoking results about
quotient maps and \separated quotient spaces.

\subsection{The inductive argument}%
\label{sub:the_inductive_argument}

The proof of \Cref{thm:gromov} repeatedly uses the corrugation operation to
improve the given formal solution in more and more directions. Stating this
precisely requires a refinement of the notion of being holonomic. Given a linear subspace
$E' \subset E$, we say that $(f, \varphi) : E \to F \times \L{E}{F}$ is $E'$-holonomic
at $x$ if $Df(x)$ and $\varphi(x)$ coincide on $E'$.

\begin{lemma}\label{lem:inductive_step}
  Let $\F$ be a formal solution of some open and ample $\Rel \subset J^1(E, F)$.
  Let $K_1 \subset E$ be a compact subset, and let $K_0$ be a compact subset of
  the interior of $K_1$. Let $C$ be a closed subset of $E$.
  Let $p$ be a dual pair on $E$ and let $E'$ be a linear subspace of $E$
  contained in $\ker \pi$.
  Let $\varepsilon$ be a positive real number.

  Assume that $\F$ is $E'$--holonomic near $K_0$, and holonomic near $C$.
  Then there is a homotopy $\F_t$ such that:
  \begin{enumerate}
    \item
      $\F_0 = \F$~,\selink{local/h_principle.lean\#L287}
    \item
      $\F_t$ is a formal solution of $\Rel$ for all $t$~,\selink{local/h_principle.lean\#L403}
    \item
      $\F_t(x) = \F(x)$ for all $t$ when $x$ is near $C$ or outside
      $K_1$~,\selink{local/h_principle.lean\#L298}\selink{local/h_principle.lean\#L320}
    \item
      $\|\pr_F\F_t(x) - \pr_F \F(x)\| \le \varepsilon$ for all $t$ and all $x$~,%
      \selink{local/h_principle.lean\#L342}
    \item
      $\F_1$ is $E' \oplus \mathbb{R}v$--holonomic near $K_0$.\selink{local/h_principle.lean\#L359}
  \end{enumerate}
\end{lemma}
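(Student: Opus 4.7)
The plan is to realise one additional holonomy direction by applying a single step of Theillière's corrugation from \Cref{sub:corrugations} in the direction $v$ determined by the dual pair $p = (\pi, v)$. The homotopy parameter $t$ will come directly from the homotopy parameter of a smooth family of loops produced by \Cref{prop:loops}, multiplied by a spatial cut-off that handles $C$ and $K_1$. At $t = 1$, the rapid oscillation produced by the corrugation will force $Df_1(v)$ to agree (up to a negligible remainder) with a prescribed value in the slice of $\Rel$, thereby adding $\RR v$ to the directions on which $\F$ is holonomic near $K_0$.

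\textbf{Constructing the loop family.} For every $x$, the slice $S_x := \Rel\bigl((x, f(x), \varphi(x)), p\bigr) \subseteq F$ is open (since $\Rel$ is open) and ample (since $\Rel$ is ample); and $\varphi(x)(v) \in S_x$ because $\F$ is a formal solution. The set $\Omega := \{(x, w) \in E \times F \mid w \in S_x\}$ is therefore open, and taking $\beta(x) := \varphi(x)(v)$, $g(x) := \varphi(x)(v)$ in \Cref{prop:loops} (with the compact set there taken to cover a neighbourhood of $C$, using that $\F$ is holonomic near $C$ so $Df(x)(v) = \varphi(x)(v)$ there) yields a smooth family $\gamma \co [0,1] \times E \times \SS^1 \to F$ with $\gamma^t_x(s) \in S_x$, $\gamma^0_x$ constant at $\varphi(x)(v)$, $\avg{\gamma^1_x} = \varphi(x)(v)$, and $\gamma$ constant near $C$.

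\textbf{Defining $\F_t$ and checking (1)--(4).} Choose a smooth cut-off $\rho \co E \to [0,1]$ that vanishes on a neighbourhood of $C$, equals $1$ on a neighbourhood of $K_0$, and is supported in the interior of $K_1$. For a parameter $N$ to be chosen large, set
\[
  f_t(x) := f(x) + \rho(x)\,\Corr{p}{N}{\gamma_t}(x), \qquad
  \varphi_t(x) := \Upd{p}{\varphi(x)}{\gamma^t_x(N\pi(x))}.
\]
Condition (1) ($\F_0 = \F$) holds because $\gamma^0$ is constant, so $\Corr{p}{N}{\gamma_0} \equiv 0$ and $\Upd{p}{\varphi}{\varphi(v)} = \varphi$. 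Condition (3) is immediate from the cut-off and from the fact that $\gamma^t_x$ is constant near $C$. Condition (4) follows from the second bullet of \Cref{prop:corrugation}: $\|\Corr{p}{N}{\gamma_t}(x)\| \le \varepsilon$ uniformly on the compact set $K_1$ for $N$ sufficiently large. For condition (2) we use that $\gamma^t_x(\cdot) \in S_x$ guarantees $\varphi_t(x) \in \Rel_{(x, f(x))}$; the dependence of $f_t$ on $\rho$ then requires a uniformity-over-$K_1$ estimate together with openness of $\Rel$.

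\textbf{Main obstacle: exact $E' \oplus \RR v$-holonomy near $K_0$.} By the fourth bullet of \Cref{prop:corrugation}, where $\rho \equiv 1$ we have $D\Corr{p}{N}{\gamma_1}(x) = \pi \otimes (\gamma^1_x(N\pi(x)) - \avg{\gamma^1_x}) + \Rem{p}{N}{\gamma_1}(x)$. On $E' \subset \ker \pi$ the $\pi$-factor vanishes, so $Df_1 - \varphi_1$ on $E'$ reduces to $\Rem{p}{N}{\gamma_1}|_{E'}$; on $\RR v$ it reduces, after cancellation using $\avg{\gamma^1_x} = \varphi(x)(v)$, to $\varphi(x)(v) - Df(x)(v) + \Rem(v)$. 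The remainder can be made arbitrarily small for $N$ large (third bullet of \Cref{prop:corrugation}), but the term $\varphi(v) - Df(v)$ on $\RR v$ is not small a priori. The delicate move is therefore to replace the naive $\varphi_t$ by one in which the value at $v$ is taken to be $Df_1(v)$ itself on a neighbourhood of $K_0$ and extended elsewhere via the update map, and similarly to absorb $\Rem$ on $E'$: this extension lies in $\Rel$ for $N$ sufficiently large by openness, because $\Upd{p}{\varphi(x)}{Df_1(x)(v)}$ is close to $\Upd{p}{\varphi(x)}{\gamma^1_x(N\pi(x))} \in \Rel$ pointwise. Coordinating this modification with smoothness in $t$, the cut-off $\rho$, and uniformity over $K_0$ is the real technical content of the proof.
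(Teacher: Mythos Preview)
Your proposal contains a genuine gap: the choice $g(x) = \varphi(x)(v)$ in the application of \Cref{prop:loops} is wrong, and this is precisely what creates the ``main obstacle'' you then struggle with. The correct choice is $g(x) = Df(x)(v)$ (with $\beta(x) = \varphi(x)(v)$ as you have it). The hypothesis $g = \beta$ near the compact set in \Cref{prop:loops} is then satisfied on $C \cap K_1$ because $\F$ is holonomic there; and the ampleness of $\Rel$ is exactly what guarantees $g(x) \in \CH(\conn{\Omega_x}{\beta(x)})$ for all $x$. With this choice the average is $\avg{\gamma^1_x} = Df(x)(v)$, so on a neighbourhood of $K_0$ (where $\rho \equiv 1$) the derivative computation gives
\[
Df_1(x)(v) \;=\; Df(x)(v) + \bigl(\gamma^1_x(N\pi(x)) - \avg{\gamma^1_x}\bigr) + \Rem{p}{N}{\gamma_1}(x)(v)
\;=\; \gamma^1_x(N\pi(x)) + \Rem{p}{N}{\gamma_1}(x)(v),
\]
and the unwanted term $Df(x)(v) - \varphi(x)(v)$ never appears.

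The paper then makes holonomy \emph{exact}, not approximate, by two further adjustments to your formulas. First, the remainder is absorbed into the formal derivative: one takes
\[
\varphi_t(x) \;=\; \Upd{p}{\varphi(x)}{\,\gamma^{t\rho(x)}_x(N\pi(x))} \;+\; \Rem{p}{N}{\gamma_1}(x).
\]
Since $\Upd{p}{\varphi}{w}$ agrees with $\varphi$ on $\ker\pi \supseteq E'$ and since $\F$ is $E'$-holonomic near $K_0$, one checks directly that $Df_1$ and $\varphi_1$ agree on $E' \oplus \RR v$ near $K_0$, with no limiting argument needed (openness of $\Rel$ is only used to ensure $\varphi_t$ still lands in $\Rel$ for large $N$). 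Second, the loop-time parameter inside $\varphi_t$ is $t\rho(x)$, not $t$; together with $f_t(x) = f(x) + t\rho(x)\Corr{p}{N}{\gamma_t}(x)$ this is what makes condition~(3) hold outside $K_1$, where your formula for $\varphi_t$ would not reduce to $\varphi$.
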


\begin{proof}[Proof sketch]
  We denote the components of $\F$ by $f$ and $\varphi$.
  Since $\Rel$ is ample, \cref{prop:loops} applied to
  $g \co x \mapsto Df(x)v$, $\beta \co x \mapsto \varphi(x)v$,
  $\Omega_x = \Rel(\F(x), p)$, and $K = C \cap K_1$ gives us a smooth family of loops
  $\gamma \co E \times [0, 1] \times \SS^1 \to F$ such that, for all $x$:
  \begin{itemize}
    \item $\forall t\, s,\; \gamma^t_x(s) \in \Rel(\F(x), \pi, v)$,
    \item $\forall s,\; \gamma^0_x(s) = \varphi(x)v$,
    \item $\bar \gamma^1_x = Df(x)v$,
    \item if $x$ is near $C$, $\forall t\, s,\; \gamma^t_x(s) = \varphi(x)v$.
  \end{itemize}
  Let $\rho: E \to \mathbb{R}$ be a smooth cut-off function which equals one on
  a neighbourhood of $K_0$ and whose support is contained in $K_1$.

  Let $N$ be a positive real
  and set $\F_t(x) = \big(f_t(x), \varphi_t(x)\big)$ where:
  \[
    f_t(x) = f(x) + t\rho(x)\Corr{p}{N}{\gamma_t},
  \]
  and:
  \[
    \varphi_t(x) = \Upd{p}{\varphi(x)}{\,\gamma^{t\rho(x)}_x(N\pi(x))} +
      \Rem{p}{N}{\gamma^1}.
  \]
  One then checks this homotopy is suitable using \Cref{prop:corrugation}
\end{proof}

Given \Cref{lem:parametricity_for_free}, the preceding lemma allows us to prove \Cref{thm:gromov}
by induction on a basis of directions in $E$. Specifically we choose a basis
$e \co \{1, \dots, n\} \to E$, take the dual basis $e^*$, and apply \Cref{lem:inductive_step}
$n$ times using the sequence of dual pairs $p_i = (e_i^*, e_i)$ and subspaces
$E_i = \Span(e_1, \dots, e_i)$.

One formalisation issue is that the whole construction carries around a lot of
data. On paper it is easy to state one lemma listing all this data once and
proving many properties. For us it was more convenient to give each property its
own lemma. Carrying around data, assumptions and constructions thus required some
planning. We mitigated this issue by using two ad-hoc structures which partly
bundle the data.

The \lstinline{landscape}\selink{local/h_principle.lean\#L72} structure records
three sets in a vector space, a
closed set \lstinline{C} and two nested compact sets \lstinline{K₀} and
\lstinline{K₁}. This is the ambient data for the local h-principle result. We
call this partly bundled because it doesn't include the data of the formal
solution we want to improve. Instead we have a Prop-valued structure
\lstinline{landscape.accepts}\selink{local/h_principle.lean\#L105} that takes a
landscape and a formal solution and
asserts some compatibility conditions. There are four conditions, which is
already enough motivation to introduce a structure instead of one definition
using the logical conjunction operator that would lead to awkward and error
prone access to the individual conditions.

The proof of this proposition involves an induction on a flag of subspaces
(nested subspaces of increasing dimensions). For the purpose of this induction
we use a second structure
\lstinline{step_landscape}\selink{local/h_principle.lean\#L91} that extends
\lstinline{landscape} with two more pieces of data, a subspace and a dual pair,
and a compatibility condition, namely the subspace has to be in the hyperplane
defined by the dual pair.

In this setup the loop family constructed by \Cref{prop:loops} is used to
construct a function whose arguments are some
\lstinline{(L : step_landscape E)}, a formal solution \lstinline{𝓕} and an
assumption \lstinline{(h : L.accepts R 𝓕)}.
Together with corrugation, it is used to build the homotopy of 1-jet sections
appearing in the proof of \Cref{lem:inductive_step} improving the formal
solution \lstinline{𝓕} in that step of the main inductive proof. A rather long
series of lemmas prove all the required properties of that homotopy,
corresponding to all conclusions of \Cref{lem:inductive_step}.

In the inductive construction itself, all conclusions are stated at once since
the induction requires knowing about each of them to proceed to the next step.
We could have introduced one more ad-hoc structure to record those conclusions
but this isn't needed (at least in that part of the project) since we need to
access its components only once.

We finish with a comment on induction in the context of Lean.
Since it is based on the calculus of inductive construction, Lean's foundations have
built-in support for inductive constructions. This can be used for instance to build
the addition on natural numbers. In this project we are talking about a
\emph{much} more involved inductive construction where each piece requires one to
prove many facts to proceed. In principle it would be possible to use an
inductive construction in the foundational sense but that would be extremely
cumbersome. Instead we state some existential statement and prove it by induction.

\section{Sphere eversion}%
\label{sec:sphere_eversion}

In this section we explain how to derive Smale's theorem from
\Cref{thm:gromov}. This is less direct than deriving it from the global version
of Gromov's theorem (for maps between manifolds), but still rather easy since
the source manifold $\SS^2$ has a simple model as a subset of a vector space and the target
manifold is just a vector space. In this section $E$ is a 3-dimensional
real vector space equipped with a scalar product and $\SS^2$ is the unit sphere in $E$.
For any point $x$ in $\SS^2$, the tangent space $T_x\SS^2$ to $\SS^2$ at $x$ is
the subspace $x^\perp$ of $E$ orthogonal to the line spanned by $x$. An immersion
of $\SS^2$ into $E$ is a smooth map $f$ defined near $\SS^2$ and such that
for every $x$ in $\SS^2$, $Df(x)$ is injective on $T_x\SS^2$. At face value
this may sound slightly stronger than the definition of an abstract immersion
from manifold theory. But one can easily prove that any abstract immersion extends to
an immersion in the elementary sense. In any case, using a definition of
immersion that is too strong would only make Smale's theorem stronger.

\begin{theorem}[{Smale \cite{Smale}}]
	There is a homotopy of immersions of $\SS^2$ into $E$ from the inclusion map to
	the antipodal map $a \co q \mapsto -q$.\selink{local/sphere_eversion.lean\#L426}
\end{theorem}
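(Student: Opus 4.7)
The plan is to apply \Cref{thm:gromov} to a first-order differential relation whose holonomic solutions near $\SS^2$ are precisely immersions in the elementary sense. Taking $F = E$, define
\[
  \Rel = \{(x, y, \psi) \in J^1(E, E) : x = 0 \text{ or } \psi|_{x^\perp} \text{ is injective}\}.
\]
With parameter space $P = \mathbb{R}$, the idea is to construct a formal homotopy from $j^1(\id)$ to $j^1(-\id)$ and let the $h$-principle promote it into a genuine homotopy of immersions at every parameter value in $[0, 1]$.

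First I would check that $\Rel$ is open (clear on $E \setminus \{0\}$, since $x^\perp$ varies smoothly there) and ample. Fix $(x, y, \varphi) \in J^1(E, E)$ with $x \ne 0$ and a dual pair $p = (\pi, v)$. If $\ker\pi = x^\perp$ then $\Upd{p}\varphi w|_{x^\perp} = \varphi|_{x^\perp}$ is independent of $w$, so the slice is either $E$ or empty---both vacuously ample. Otherwise $\ker\pi \cap x^\perp$ is a line; pick a spanning vector $u_0$ and complete to a basis $\{u_0, u_1\}$ of $x^\perp$ with $\pi(u_1) = 1$, writing $u_1 = h + v$ for some $h \in \ker\pi$. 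The map $\Upd{p}\varphi w|_{x^\perp}$ sends $a u_0 + b u_1$ to $a \varphi(u_0) + b(\varphi(h) + w)$, so the slice is the set of $w \in E$ for which $\varphi(u_0)$ and $\varphi(h) + w$ are linearly independent: empty when $\varphi(u_0) = 0$, and otherwise the complement in the three-dimensional $E$ of the affine line $-\varphi(h) + \mathbb{R}\varphi(u_0)$, which is open, connected, and has convex hull $E$. In every case the slice is ample. The case $x = 0$ is trivial since the relation imposes no constraint.

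Second, I would build the formal solution by rotating tangent planes. Fix an orientation on $E$ so that a cross product $\times$ is available, and choose a smooth cut-off $\chi : \mathbb{R} \to [0, 1]$ identically $0$ on an open neighbourhood of $(-\infty, 0]$ and identically $1$ on an open neighbourhood of $[1, \infty)$. Set
\[
  f_p(x) = \cos(\pi \chi(p))\, x, \qquad \varphi_p(x)(u) = \cos(\pi\chi(p))\, u + \sin(\pi\chi(p))\, (x \times u),
\]
and $\F_0(p, x) := (f_p(x), \varphi_p(x))$. This is smooth on $\mathbb{R} \times E$, agrees with $j^1(\id)$ on an open neighbourhood of $(-\infty, 0] \times E$ and with $j^1(-\id)$ on an open neighbourhood of $[1, \infty) \times E$, and for $x \ne 0$ the restriction $\varphi_p(x)|_{x^\perp}$ has determinant $\cos^2(\pi\chi(p)) + |x|^2 \sin^2(\pi\chi(p)) > 0$, so $\F_0$ is a formal solution of $\Rel$ throughout $\mathbb{R} \times E$. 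Applying \Cref{thm:gromov} with $C = \{(p, x) : p \le 0 \text{ or } p \ge 1\}$, $K = [0, 1] \times \SS^2$, and any $\e > 0$ yields $\F$ with $\F((1, p), x)$ holonomic near $K$ and $\F((t, p), x) = \F_0(p, x)$ for $(p, x)$ near $C$ and all $t$. The map $f(p, x) := \pr_F \F((1, p), x)$ is then jointly smooth, satisfies $f(0, x) = x$ and $f(1, x) = -x$ for every $x$, and for each $p \in [0, 1]$ restricts to an immersion of a neighbourhood of $\SS^2$; this is the content of \lstinline{sphere_eversion_of_loc}.

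The mathematical content is essentially forced once \Cref{thm:gromov} is available. I expect the main difficulty to be administrative: translating the holonomic-near-$K$ conclusion of \Cref{thm:gromov} (a $1$-jet in $\Rel$) into the Lean predicate \lstinline{sphere_immersion} which speaks of tangent maps on the abstract manifold $\SS^2$, and threading the smoothness of the rotational formal solution---including near $x = 0$, where $x^\perp$ degenerates---through every clause of \Cref{thm:gromov}'s hypotheses.
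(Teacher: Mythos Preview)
Your relation $\Rel$ is \emph{not open}, so \Cref{thm:gromov} cannot be applied to it. Take $\psi = 0$ (or any $\psi$ that fails to be injective on some $2$-plane). Then $(0, y, 0) \in \Rel$ because $x = 0$, but for every $x \ne 0$ the restriction $0|_{x^\perp}$ is non-injective, so $(x, y, 0) \notin \Rel$; letting $x \to 0$ exhibits a point of $\Rel$ that is not interior. The paper handles exactly this obstruction by relaxing the ``no-constraint'' locus from the single point $\{0\}$ to the open ball $B$ of radius $9/10$, defining
\[
  \Rel = \left\{(x, y, \varphi) : x \notin B \implies \varphi|_{x^\perp} \text{ is injective}\right\},
\]
so that near $x = 0$ the condition is vacuous on an \emph{open} set and openness at the boundary $|x| = 9/10$ follows from the openness of injectivity on $x^\perp$ when $x$ is bounded away from $0$. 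Your formal solution, your ampleness computation, and your choice of $C$ and $K$ all survive this fix essentially unchanged (your $\Rel$ is contained in the paper's $\Rel$, and for $x \in B$ every slice is all of $E$).

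Apart from this gap, your argument tracks the paper closely: the rotational formal homotopy you build via the cross product is exactly the map the paper writes as $(t, x) \mapsto ((1-2t)x, \rot_{\pi t, x})$, and the paper likewise flags the need for care near the origin. Your ampleness analysis is a valid variant of the paper's; the paper instead replaces $v$ by its rescaled projection onto $x^\perp$ (exploiting that slices for a fixed $\ker\pi$ differ only by translation) to reduce directly to the key observation that $\Upd{p}{\varphi}{w}$ is injective iff $w \notin \varphi(\ker\pi)$, but both routes end at ``complement of a line in a three-dimensional space is ample''.
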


Because we want to deduce this from our statement about maps between
vector spaces, we need to be slightly careful. We denote by $B$ the ball
with radius $9/10$ in $E$. The relation we use is
\[
  \Rel = \left\{(x, y, \varphi) \in J^1(E, E) \;|\;
  x \notin B \implies \rst{\varphi}{x^\perp} \text{ is injective}\right\}.
\]
\begin{proposition}
  Any solution of $\Rel$ is an immersion of $\SS^2$ into $E$.\selink{local/sphere_eversion.lean\#L68}
  The relation $\Rel$ is open\selink{local/sphere_eversion.lean\#L131} and ample.\selink{local/sphere_eversion.lean\#L169}
\end{proposition}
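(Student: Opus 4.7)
The plan is to verify the three assertions separately. For the first, a solution $f \co E \to E$ of $\Rel$ is by definition smooth with $(x, f(x), Df(x)) \in \Rel$ everywhere; since every $x \in \SS^2$ has $\|x\| = 1 > 9/10$ and hence $x \notin B$, the defining condition forces $Df(x)|_{x^\perp}$ to be injective. As $T_x \SS^2 = x^\perp$, this is exactly the elementary definition of an immersion of $\SS^2$ in $E$.

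For openness I would interpret $B$ as the \emph{open} ball (this choice is needed: with $B$ closed, points on $\partial B$ would not have the injectivity condition imposed on them, yet nearby points outside $B$ would, which breaks openness). Then $\Rel$ is the union of two open subsets of $J^1(E, E)$: the set $\{(x, y, \varphi) \mid x \in B\}$, open since $B$ is, and the set $\{(x, y, \varphi) \mid \varphi|_{x^\perp} \text{ is injective}\}$, open because injectivity of linear maps is an open condition and, away from $x = 0$, the map $(x, \varphi) \mapsto \varphi|_{x^\perp}$ varies continuously; the potentially problematic point $x = 0$ is already covered by the first open set.

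The substantive work is ampleness. Fixing $(x, y, \varphi) \in J^1(E, E)$ and a hyperplane $H \subseteq E$, I must show the slice $S \subseteq \{\psi \mid \psi|_H = \varphi|_H\}$, a subset of a $3$-dimensional affine space, is ample. If $x \in B$ then $S$ is the whole affine space, trivially ample. Otherwise the geometry depends on whether $H = x^\perp$. If so, then $\psi|_H = \varphi|_H$ already determines $\psi|_{x^\perp} = \varphi|_{x^\perp}$, so $S$ is either the entire affine space (if $\varphi|_{x^\perp}$ is injective) or empty, both ample, the latter vacuously. If $H \neq x^\perp$, set $L := H \cap x^\perp$ (a line) and choose any $u \in x^\perp \setminus H$; a short linear-algebra check shows $\psi|_{x^\perp}$ is injective iff $\varphi|_L$ is injective and $\psi(u) \notin \varphi(L)$. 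If $\varphi|_L = 0$ then $S$ is empty; otherwise, parametrising the affine slice by $\psi(v)$ for any $v \notin H$ identifies $S$ with the complement of a single affine line in $\R^3$.

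The main obstacle is verifying that this last set, the complement of an affine line in $\R^3$, is ample. I would argue that it is path-connected (in fact homotopy equivalent to $\SS^1$) and that its convex hull is all of $\R^3$ since any point $p$ on the line is the midpoint of $p + w$ and $p - w$ for any nonzero $w$ perpendicular to the line, both of which lie in the complement. All the remaining cases of ampleness collapse to either the whole affine space or the empty set, so this single line-complement computation is really the geometric heart of the argument.
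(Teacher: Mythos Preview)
Your argument is essentially correct, and for ampleness it is even a bit more complete than the paper's sketch, but there is one point where you are glossing over exactly what the paper identifies as the real content.

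\textbf{Openness.} The phrase ``the map $(x,\varphi)\mapsto \varphi|_{x^\perp}$ varies continuously'' does not type-check as stated: the domain $x^\perp$ moves with $x$, so $\varphi|_{x^\perp}$ does not live in a single space of linear maps. This is precisely the ``geometrically obvious'' step the paper singles out and makes precise. Their fix is to reduce to a fixed subspace: near a given $x_0\neq 0$ one composes with the restriction to $x_0^\perp$ of the orthogonal projection onto $x^\perp$, obtaining a continuous map $(x,y,\varphi)\mapsto \varphi\circ\mathrm{pr}_{x^\perp}\circ j_0\in L(x_0^\perp,E)$, and then uses that injectivity is open in this fixed space together with the fact that $\mathrm{pr}_{x^\perp}\circ j_0$ is an isomorphism onto $x^\perp$ when $\langle x_0,x\rangle\neq 0$. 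Your decomposition of $\Rel$ as a union of two open sets is fine, and your remark that $B$ must be the \emph{open} ball is correct and well spotted; but the openness of the second set needs something like the paper's trick (or an equivalent local-trivialisation argument) to be a proof rather than an appeal to intuition.

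\textbf{Ampleness.} Here you take a genuinely different but equally valid route. The paper works with dual pairs $(\pi,v)$ and the update map $\Upd{p}{}{}$, and in the case $\ker\pi\neq x^\perp$ it replaces $v$ by a carefully chosen vector in $x^\perp$ so that the general lemma ``$\Upd{p}{\varphi}{w}$ is injective iff $w\notin\varphi(\ker\pi)$'' applies directly. You instead stay with the hyperplane formulation, split $x^\perp=L\oplus\RR u$ with $L=H\cap x^\perp$, and compute by hand that the slice is the complement of an affine line (or empty if $\varphi|_L=0$). Your approach is more self-contained and explicitly handles the case where $\varphi|_{x^\perp}$ is not injective, giving an empty slice; the paper restricts to $\sigma\in\Rel$ and relies implicitly on the observation that the slice depends only on $\varphi|_{\ker\pi}$. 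Either way the crux is the same: the complement of a line in a three-dimensional space is ample, and your midpoint argument for that is correct.
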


As far as we know, this proposition is new. This makes sense because it is not
needed to deduce Smale's theorem from the global version of Gromov's theorem.
We will explain some ideas of the proof since it also provides examples
of geometrically-obvious facts whose proofs need some thought.

In order to prove that $\Rel$ is open, the main task is to fix $x_0 \notin B$
and $\varphi_0 \in \L{E, E}$ which is injective on $x_0^\perp$ and prove that,
for every $x$ close to $x_0$ and $\varphi$ close to $\varphi_0$, $\varphi$ is
injective on $x^\perp$. This is a typical situation where geometric intuition
makes it feel like there is nothing to prove.

One difficulty is that the subspace $x^\perp$ moves with $x$. We reduce to a fixed
subspace by considering the restriction to $x_0^\perp$ of the orthogonal
projection onto $x^\perp$. One can check this is an isomorphism as long as $x$
is not perpendicular to $x_0$.
More precisely, we consider $f \co J^1(E, E) \to \RR \times \L{x_0^\perp}{E}$ which sends
$(x, y, \varphi)$ to $(\langle x_0, x\rangle, \varphi \circ \pr_{x^\perp} \circ j_0)$
where $j_0$ is the inclusion of $x_0^\perp$ into $E$. The set $U$ of injective
linear maps is open in $\L{x_0^\perp}{E}$ and the map $f$ is continuous
hence the preimage of $\{0\}^c \times U$ is open. This is good enough for us because
injectivity of $\varphi \circ \pr_{x^\perp} \circ j_0$ implies injectivity of
$\varphi$ on the image of $\pr_{x^\perp} \circ j_0$ which is $x^\perp$ whenever
$\langle  x_0, x\rangle \neq 0$.

The next thing to prove is ampleness of $\Rel$. The key observation is that if
one fixes vector spaces $F$ and $F'$, a dual pair $(\pi, v)$ on $F$, and an
injective linear map $\varphi \co F \to F'$ then the updated map
$\Upd{p}{\varphi}{w}$ is injective if and only if $w$ is not in
$\varphi(\ker\pi)$. It then only remains to prove one last
intuitively obvious result: the complement of a line in a three dimensional space
is ample (we actually prove a more general result).\selink{local/ample_set.lean\#L237}
We have not been able to find any informal source that provides any explanation of this.

When using the global version of Gromov's theorem to prove Smale's theorem, the
preceding key observation is enough to deduce the ampleness of the relevant relation
from the ampleness of the complement of a subspace with codimension at least two.
In our case we still have some work to do. It suffices to prove that for
every $\sigma = (x, y, \varphi) \in \Rel$ and every dual pair $p = (\pi, v)$ on
$E$, the slice $\Rel(\sigma, p)$ is ample. If $x$ is in $B$ then $\Rel(\sigma,
p)$ is all of $E$ which is obviously ample. So we assume $x$ is not in $B$.
Since $\sigma$ is in $\Rel$, $\varphi$ is injective on $x^\perp$. The slice is the
set of $w$ such that $\Upd{p}{\varphi}{w}$ is injective on $x^\perp$.
Assume first $\ker\pi = x^\perp$.
Then $\Upd{p}{\varphi}{w}$ coincides with $\varphi$ on $x^\perp$ hence the
slice is all of $E$. Assume now that $\ker\pi \neq x^\perp$. The slice is
not very easy to picture in this case but one should remember that, up to
affine isomorphism, the slice depends only on $\ker \pi$. More precisely, if
we keep $\pi$ but change $v$ then the slice is simply translated in $E$. Here
we replace $v$ by the projection on $x^\perp$ of the vector dual to $\pi$ rescaled
to keep the property $\pi(v) = 1$. What has been gained is that we now have
$v \in x^\perp$ and $x^\perp = (x^\perp \cap \ker \pi) \oplus \RR v$. Since
$\varphi$ is injective on $x^\perp$, $\varphi(x^\perp \cap \ker \pi)$ is a line, and
$\Upd{p}{\varphi}{w}$ is injective on $x^\perp$ if and only if $w$ is in the
complement of this line according to the key observation above. So we are indeed back
to the fact that the complement of a line is ample.

Lastly, we need a homotopy of formal solutions of $\Rel$. Roughly, we want to use
$\F : (t, x) \mapsto (1-2tx, \rot{\pi t}{x})$ where $\rot{\alpha}{x}$ is the
rotation with angle $\alpha$ around the axis spanned by $x$. But some extra
care is needed to ensure smoothness near the origin
(this artificial difficulty is the price we must pay for
extending our domain to all of $E$).\selink{local/sphere_eversion.lean\#L328}

\section{Blueprint infrastructure}%
\label{sec:blueprint_infrastructure}

Before the formalisation started the first author wrote a detailed blueprint in
\LaTeX\ with all the definitions and lemmas that were expected to be required for
the proof. This was meant to prepare the formalisation work and allow
contributions from people who did not know the area of mathematics. This is a
well-known strategy, see \cite{Hales_blueprint} for an implementation at a much
bigger scale.

The new ingredient was to write and use a plugin for \textsf{plasTeX},\href{http://plastex.github.io/plastex/}{\link} a very
extensible TeX compiler written in python. This allows one to render the blueprint
document in HTML with hyperlinks to precise locations in Lean files corresponding to each result.
The software also produces a dependency graph that shows the progress of the project and
assists with coordination. This graph is based on manual dependency declarations so
that we can indicate a dependency even in the absence of a \LaTeX\ reference.
Each node of the graph is either a definition or a lemma statement and is colour-coded
to indicate whether something is stated or proven or ready to be stated or proven.
This \textsf{leanblueprint} plugin\href{https://github.com/PatrickMassot/leanblueprint}{\link} is now used by at least half a dozen
formalisation projects. Adapting it to work with other proof assistants would
be very easy.

During the project we continuously edited the blueprint text to include more
explanation that we found to be necessary, or to cater for minor changes in strategy. The
end result is that we now have a somewhat bilingual informal/formal account of
all our results, in which each side is useful for illuminating the other.

\section{Conclusion and future work}%
\label{sec:conclusion_and_future_work}

We believe that this work demonstrates that arguments in differential
topology are not beyond the reach of formalisation. There were indeed many
places where informal sources do not provide any explanation, except perhaps a
a picture. Providing formal proofs was a rather pleasant process overall.

In order to see what was gained, one can return to the proof sketch we wrote
for \Cref{lem:inductive_step}. It is typical of many proofs in
differential topology. We rather carefully described a construction, with a
quite a lot of input data, and then the reader is expected to agree that this
data satisfies the desired conditions. Here we think the main benefit of a
formalised version is that the input, assumptions, and desired output are very
clearly stated. Reaching that level of clarity is difficult to achieve without
a proof assistant. After seeing such a statement, most readers probably
still prefer to work out a mental picture rather than seeing details of the
proof, but they are in a much better position to do so. And of course having
such a precise statement doesn't prevent anyone from also writing a more vague
but less intimidating version first.

The next step in this project is to deduce from \Cref{thm:gromov} the global
version for maps between smooth manifolds. In informal accounts this seldom
occupies more than a couple of paragraphs. Of course this assumes the
theory of smooth manifolds is known, including jet spaces. We already have
many pieces in \mathlib but reducing to vector spaces actually requires some care.

In the more distant future, we hope that formalised mathematics will allow people to
engage with much of differential topology at many different scales, from
rough heuristic pictures to full details according to the reader's wishes.

\section*{Acknowledgements}
We would like to thank the \mathlib\ community for developing a usable library with the
requisite material used in this project.
There are too many contributors that developed the parts of the library that we used to mention here explicitly, but we want to specifically thank a few of the contributors.
We would like to thank
Yury Kudryashov and Sébastien Gouëzel for developing the main part of the library for
calculus and integration.
We also want to thank Yury for the development of partitions of unity.
We greatly appreciate Heather Macbeth's work for writing the theory of rotations
specifically for this project.
We appreciate Anatole Dedecker's contributions to the theory of paths and ample sets.
We thank Johan Commelin and Scott Morrison for the first version of Carathéodory's theorem.
Special thanks to Gabriel Ebner for help with the \lstinline{has_uncurry} type-class.

The second author appreciates the support by Fondation Mathématique Jacques Hadamard
for the support via a postdoc fellowship.

\Urlmuskip=0mu plus 1mu %adds glue to urls
\bibliographystyle{amsalphaurl}
\bibliography{eversion}
\end{document}